\theoremstyle{plain}
\newtheorem{thm}{Theorem}
\newtheorem{lem}[thm]{Lemma}
\DeclareMathOperator{\tr}{Tr}
\newcommand{\ket}[1]{|#1\rangle}
\newcommand{\tn}[1]{^{\otimes #1}}
\newcommand{\mc}{\ensuremath{\mathcal}}
\newcommand{\ideal}[1]{\ensuremath{\mc{#1}}}
\newcommand{\noisy}[1]{\ensuremath{\tilde{\ideal{#1}}}}
\begin{document}

\title{Characterizing large-scale quantum computers via cycle benchmarking}

\date{\today}

\author{Alexander Erhard}
\thanks{These authors contributed equally to this work. Contact: alexander.erhard@uibk.ac.at, jwallman@uwaterloo.ca}
\affiliation{Institute for Experimental Physics, University of Innsbruck, 6020 Innsbruck, Austria}

\author{Joel J. Wallman}
\thanks{These authors contributed equally to this work. Contact: alexander.erhard@uibk.ac.at, jwallman@uwaterloo.ca}
\affiliation{Institute for Quantum Computing and Department of Applied Mathematics, University of Waterloo, Waterloo, Canada}
\affiliation{Quantum Benchmark Inc., Kitchener, ON N2H 4C3, Canada}

\author{Lukas Postler}
\affiliation{Institute for Experimental Physics, University of Innsbruck, 6020 Innsbruck, Austria}

\author{Michael Meth}
\affiliation{Institute for Experimental Physics, University of Innsbruck, 6020 Innsbruck, Austria}

\author{Roman Stricker}
\affiliation{Institute for Experimental Physics, University of Innsbruck, 6020 Innsbruck, Austria}

\author{Esteban A. Martinez}
\affiliation{Institute for Experimental Physics, University of Innsbruck, 6020 Innsbruck, Austria}
\affiliation{Niels Bohr Institute, University of Copenhagen, 2100 Copenhagen, Denmark}

\author{Philipp Schindler}
\affiliation{Institute for Experimental Physics, University of Innsbruck, 6020 Innsbruck, Austria}

\author{Thomas Monz}
\affiliation{Institute for Experimental Physics, University of Innsbruck, 6020 Innsbruck, Austria}

\author{Joseph Emerson}
\affiliation{Institute for Quantum Computing and Department of Applied Mathematics, University of Waterloo, Waterloo, Canada}
\affiliation{Quantum Benchmark Inc., Kitchener, ON N2H 4C3, Canada}

\author{Rainer Blatt}
\affiliation{Institute for Experimental Physics, University of Innsbruck, 6020 Innsbruck, Austria}
\affiliation{Institute for Quantum Optics and Quantum Information of the Austrian Academy of Sciences, 6020 Innsbruck, Austria}

\begin{abstract}
Quantum computers promise to solve certain problems more efficiently than their digital counterparts.
A major challenge towards practically useful quantum computing is characterizing and reducing the various errors that accumulate during an algorithm running on large-scale processors.
Current characterization techniques are unable to adequately account for the exponentially large set of potential errors, including cross-talk and other correlated noise sources.
Here we develop cycle benchmarking, a rigorous and practically scalable protocol for characterizing local and global errors across multi-qubit quantum processors.
We experimentally demonstrate its practicality by quantifying such errors in non-entangling and entangling operations on an ion-trap quantum computer with up to 10 qubits, with total process fidelities for multi-qubit entangling gates ranging from $99.6(1)$\,\% for 2 qubits to $86(2)$\,\% for 10 qubits.
Furthermore, cycle benchmarking data validates that the error rate per single-qubit gate and per two-qubit coupling does not increase with increasing system size.
\end{abstract}

\maketitle

Practical methods to characterize quantum processes acting on large-scale quantum systems are required to assess current devices and steer the development of future, more powerful, devices. In principle, quantum processes can be fully characterized using, for example, quantum process tomography~\cite{Chuang1997} or gate set tomography~\cite{Merkel2012,Blume-Kohout2013,Blume-Kohout2017}.
However, any protocol for fully characterizing a quantum process requires a number of experiments and digital post-processing resources that grows exponentially with the number of qubits, even with improvements such as compressed sensing~\cite{Flammia2012,Rodionov2014}.
As a result, the largest quantum processes that have been fully characterized to date acted only on three qubits~\cite{Weinstein2004}.

The exponential resources required for a full characterization can be circumvented by extracting partial information about quantum processes.
A partial characterization typically yields some figure of merit, such as the process fidelity~\footnote{The process fidelity, also known as the process fidelity, is equivalent to the more commonly used average gate fidelity up to a dimensional factor that is approximately 1~\cite{Horodecki1999,Nielsen2002}.
However, the process fidelity is more convenient because it satisfies $F(\noisy{G}\otimes\noisy{H}, \ideal{G}\otimes\ideal{H}) = F(\noisy{G}, \ideal{G})F(\noisy{H}, \ideal{H})$ for any two quantum processes \ideal{G} and \ideal{H}, as can readily be verified from \cref{eq:processFidelity}.}, comparing the noisy implementation of a quantum process to the desired operation. 

The process fidelity can be efficiently estimated by randomized benchmarking~\cite{Emerson2005,Dankert2009,Magesan2011} or direct fidelity estimation~\cite{Flammia2011,DaSilva2011,Moussa2012}. 
Direct fidelity estimation can be efficient and hence has been implemented for up to 7 qubits~\cite{lu2015experimental}, but conflates state preparation and measurement (SPAM) errors with the process fidelity, limiting its value for realistic systems.
SPAM errors increase with the system size and so robustness to SPAM is increasingly important for many qubits.
Randomized benchmarking decouples the SPAM errors from gate operation errors by applying multiple random elements of the $N$-qubit Clifford group ~\cite{Dankert2009,Magesan2011}. 
However, implementing each Clifford operation requires $\mathcal{O}(N^2/\log{N})$ primitive two-qubit operations~\cite{Aaronson2004}, so that randomized benchmarking provides very coarse information about the primitive operations.
Furthermore, for error rates as low as $0.1\%$ per two-qubit  operation, a single 10-qubit Clifford operation will have a cumulative error rate on the order of $10\%$, which substantially increases the number of measurements required to accurately estimate the process fidelity.
Due to these practical limitations, randomized benchmarking has only been applied on operations involving three or less qubits~\cite{McKay2017}. 
While randomized benchmarking can be performed on small subsets of the qubit register \cite{Gambetta2012}, such experiments do not explore the full Hilbert space and therefore will not detect important performance-limiting error mechanisms such as cross-talk.
Most crucially, undetected cross-talk and other spatially correlated errors will typically require much higher overheads in fault-tolerant quantum error correction schemes~\cite{Preskill2013}.
Hence characterizing all significant errors affecting an entire register is a critical prerequisite for scalable quantum computation.
To achieve this, we focus on the concept of a \emph{cycle} of operations (introduced in \cite{Wallman2016a}), which is a set of operations that act on an entire quantum register within a set period of time, in analogy to a digital clock cycle.

In this paper, we introduce cycle benchmarking (CB), a protocol for estimating the effect of all global and local error mechanisms that occur when a clock cycle of operations is applied to a quantum register.
We prove that CB is robust to SPAM errors and that the number of measurements required to estimate the process fidelity to a fixed precision is approximately independent of the number of qubits.
We demonstrate the practicality of CB for many-qubit systems by using it to experimentally estimate the process fidelity of both non-entangling Pauli operations and the multi-qubit entangling M{\o}lmer-S{\o}rensen (MS) gate~\cite{sorensen1999quantum, sorensen2000entanglement} acting on up to ten qubits.
We also confirm that the protocol and analysis methods, derived under theoretical assumptions, produce consistent results in our experimental system.

Mathematically, the ideal operation of interest is described by the corresponding unitary matrix $G$. Its action is expressed by a map $\ideal{G}: \rho\to G \rho G^\dagger$ that acts on the state of the quantum register, described by the density matrix $\rho$. We denote the map of an ideal operation by capital calligraphic letters, such as $\ideal{G}$, and their noisy experimental implementations will be indicated by an overset tilde, such as $\noisy{G}$.
We denote the composition of gates by the natural matrix operations for the map representation, so, e.g., \ideal{R}\ideal{G} means first apply \ideal{G} then apply \ideal{R}, and $\ideal{G}^m$ means apply \ideal{G} a total of $m$ times. A particularly important class of processes are Pauli cycles $\ideal{P}$, where the unitary matrix of the process is the $N$-qubit Pauli matrix $P$.

We evaluate the quality of a noisy process \noisy{G} by its process fidelity to the ideal target \ideal{G}, which can be written as~\cite{Flammia2011}
\begin{align}\label{eq:processFidelity}
    F(\noisy{G},\ideal{G}) = \sum_{P\in \{I,X,Y,Z\}^{\otimes N}} 4^{-N}F_P(\noisy{G},\ideal{G}),
\end{align}
where
\begin{align}\label{eq:PauliFidelity}
F_P(\noisy{G},\ideal{G}) = 2^{-N} \tr\left[\ideal{G}(P)\noisy{G}(P)\right].
\end{align}

Each quantity $F_P(\noisy{G},\ideal{G})$ can be experimentally estimated by preparing an eigenstate of $P$, applying the noisy gate $\noisy{G}$, and then measuring the expectation value of the ideal outcome $\ideal{G}(P)$. 
The process fidelity may be estimated by averaging $F_P(\noisy{G},\ideal{G})$ over a set of Pauli matrices.
However, a sampling protocol (as in direct fidelity estimation~\cite{Flammia2011,DaSilva2011}) for estimating these individual terms is not robust to SPAM errors.
Robustness to SPAM is particularly important because SPAM errors can dominate the gate errors.

We now outline how the CB protocol can quantify the effect of global and local error mechanisms affecting different primitive cycle operations of interest.
Inspired by randomized benchmarking~\cite{Emerson2005}, SPAM errors can be decoupled from the process fidelity by applying the noisy operation of interest $\noisy{G}$ a total of $m$ times and extracting the process fidelity from the decay of $F_P(\noisy{G}^m,\ideal{G}^m)$ as a function of the sequence length $m$.
Extracting a meaningful error per application of the gate of interest is nontrivial for generic noise channels~\cite{Carignan-Dugas2015}.
However, the process fidelity can be rigorously extracted from the decay of $F_P(\noisy{G}^m,\ideal{G}^m)$ with $m$ if the noise process is a Pauli channel.
We can engineer an effective Pauli noise channel by introducing a round of random Pauli cycles at each time step between each application of the cycle of interest~\cite{Knill2005} and the overhead for this randomization can then be eliminated via randomized compiling~\cite{Wallman2016a}.
The effective noise is then associated with the composition of $\ideal{G}$ with a random Pauli cycle, called a dressed cycle, which is an important characterization primitive for any algorithm implemented via randomized compiling~\cite{Wallman2016a}.
Therefore the estimated quantity is the average of the process fidelities of the composite cycle of $\noisy{G}$ combined with a uniformly random Pauli cycle \noisy{R},
\begin{align}\label{eq:compositeFidelity}
F_{\mathrm{RC}}(\noisy{G}, \ideal{G}) =    \sum_{\ideal{R}\in\{\ideal{I},\ideal{X},\ideal{Y},\ideal{Z}\}^{\otimes N}} 4^{-N} F(\noisy{G}\noisy{R},\ideal{G}\ideal{R}).
\end{align}
The process fidelity of the noise on $\noisy{G}$ alone may also be estimated by taking the ratio of the estimates obtained for \noisy{G} and the identity process \noisy{I}, in analogy to interleaved benchmarking~\cite{Magesan2012}.
It should be noted that this method of estimating the fidelity of the noise on \noisy{G} alone is generally subject to a large systematic uncertainty~\cite{Carignan-Dugas}, so the CB method is most precise in the important context of characterizing errors on dressed cycles~\cite{Wallman2016a}.

\begin{figure*}
    \centering
    \includegraphics[width=0.9\textwidth]{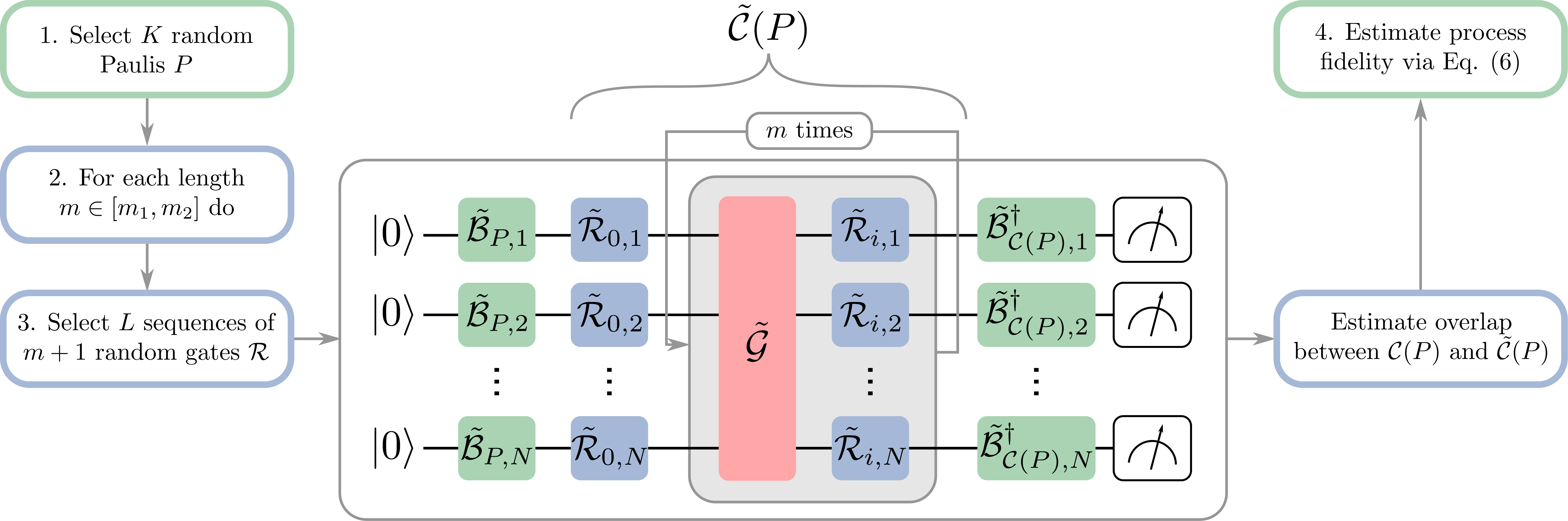}
\caption{Schematic circuit implementation of the experimental cycle benchmarking (CB) protocol. The protocol can be subdivided into three parts, depicted by the different colors.
The green gates describe basis changing operations, which are defined in the Supplementary Information. 
The red gates $\noisy{G}$ are the noisy implementations of some gate of interest.
The blue gates are random Pauli cycles that are introduced to create an effective Pauli channel per application of the gate of interest, where $\noisy{R}_{i,j}$ denotes the $j^\mathrm{th}$ tensor factor of the $i^\mathrm{th}$ gate. Creating an effective Pauli channel per application enables errors to be systematically amplified under $m$-fold iterations for more precise and SPAM-free estimation of the errors in the interleaved red gates \noisy{G}.
The blue and the red gates together form the random circuit $\noisy{C}(P)$.
The sequence of local operations before the first and last rounds of random Pauli cycles are identified as conceptually distinct but were compiled into the initial and final round of local gates in the experiment. The experimental parameters $K, m$, and $L$ of this work are given in the Supplementary Information.} 
\label{fig:CBcircuit}
\end{figure*}

The full cycle benchmarking protocol for characterizing the errors occurring under a fixed cycle of Clifford gates $\ideal{G}$ composed with a random Pauli cycle $\ideal{R}$ is as follows, illustrated in \cref{fig:CBcircuit}, where we explain the motivation for each step further below:
\begin{enumerate}
    \item Select a set of $N$-qubit Pauli matrices $\sf{P}$ with $K=|\sf{P}|$ elements.
    \item Select two lengths $m_1$ and $m_2$ such that the multiple application of $\ideal{G}$ composes to the identity $\ideal{G}^{m_1}=\ideal{G}^{m_2}=\ideal{I}$.
    \item Perform the following sequence for each Pauli matrix $P\in\sf{P}$, length $m\in (m_1, m_2)$, and $l\in(1,\ldots,L)$, where $L$ describes the number of random sequences per Pauli.
        \item[3a.] 
        Select $m+1$ random $N$-qubit Pauli cycles $\ideal{R}_0, \ideal{R}_1,\ldots, \ideal{R}_m$, and define the randomized circuit 
        \begin{align}\label{eq:circuit}
        \ideal{C}(P) = \ideal{R}_{m} \ideal{G} \ideal{R}_{m-1} \ideal{G}\ldots \ideal{R}_1\ideal{G}\ideal{R}_0
        \end{align}
        as illustrated in \cref{fig:CBcircuit}.
        \item[3b.] Calculate the expected outcome of the sequence $\ideal{C}(P)$ assuming ideal gate implementations. 
        \item[3c.] [Main experiment] Implement $\ideal{C}(P)$ and estimate the overlap 
        \begin{align}
            f_{P,m,l} = \tr [\ideal{C}(P) \, {\noisy{C}(\rho)}]
        \end{align} 
        between the expected outcome and  the noisy implementation $\noisy{C}(\rho)$ for some initial state $\rho$ that is a $+1$-eigenstate of $P$.
        State preparation and measurement are realized by applying the operations $\noisy{B}_P$ and $\noisy{B}^\dagger_{\ideal{C}(P)}$ that are described in the Supplementary Information. 
        \item 
        Estimate the composite process fidelity via
        \begin{align}\label{eq:fidelityEstimate}
    F_{\mathrm{RC}}(\noisy{G}, \ideal{G}) = \sum_{P \in \sf{P}}\frac{1}{|\sf{P}|}\left(\frac{\sum_{l=1}^L f_{P,m_{2},l}}{\sum_{l=1}^L f_{P,m_{1},l}}\right)^{\frac{1}{m_{2}-m_{1}}} \, .
\end{align}
\end{enumerate}

Step 1 ensures that the action of the $N$-qubit process is accurately estimated. 
In the Supplementary Information 
we prove that the number of Pauli matrices that need to be sampled is independent of the number of qubits, highlighting the scalability of the protocol for large quantum processors. 

Step 2 ensures that the measurement procedures for circuits in \cref{eq:circuit} with two different values of $m$ are the same.
Having the same measurement procedures for the two values of $m$ is crucial to decouple the SPAM errors from the decay in the process fidelity via the ratio in \cref{eq:fidelityEstimate}.
In our experiment, we always choose $m_{1}=4$ and $m_{2}$ to be an integer multiple of 4 as, for the considered gates, applying the operation four times subsequently yields the identity process $\ideal{G}^4=\ideal{I}$.

In step 3a, we choose random Pauli cycles to engineer an effective Pauli noise process across the $L$ randomizations.
This enables us to extract a process fidelity from the decay of $\sum_{l=1}^L f_{P,m,l}/L$ with the sequence length $m$.
This protocol is a special case of a more general protocol that can be used to efficiently characterize non-Clifford gates~\cite{patentApp} by selecting random gates and correction operators using randomized compiling~\cite{Wallman2016a} instead of Pauli frame randomization.

In step 3b, for any Clifford cycle \ideal{G}, Pauli matrix $P$, and Pauli cycles $\ideal{R}_0$, \ldots, $\ideal{R}_m$ the expected outcome of the ideal implementation $\ideal{C}(P)$ is a Pauli matrix that can be efficiently calculated.
Note that only the sign of $\ideal{C}(P)$ depends on the random Pauli cycles.
This sign is accounted for when estimating the expectation value with the procedure outlined in the Supplementary Information. 
Incorporating the sign engineers a measurement of the expectation value of $\ideal{C}(P)$ that is robust to SPAM errors, as otherwise the expectation values result from a multi-exponential decay~\cite{Carignan-Dugas2015,Helsen2018}.

In step 3c, we experimentally prepare an eigenstate of a Pauli matrix $P$, apply a circuit $\noisy{C}$ with interleaved random Pauli cycles, and measure the expectation value of $\ideal{C}(P)$.
The explicit procedures we use for preparing the eigenstate and measuring the expectation value are described in the Supplementary Information. 
As discussed in the Supplementary Information, 
the number of measurements required to estimate the expectation value to a fixed additive precision is independent of the number of qubits.

As we prove in the Supplementary Information, 
the expected value of $F_{\mathrm{RC}}(\noisy{G}, \ideal{G})$ in~\cref{eq:fidelityEstimate} for two values of $m_{1}$ and $m_{2}$ as in step 2 is equal to the composite process fidelity $F_{\mathrm{RC}}(\noisy{G}, \ideal{G})$ in \cref{eq:compositeFidelity} up to $\mathcal{O}([1-F_{\mathrm{RC}}(\noisy{G},\ideal{G})]^2)$, and always provides a lower bound.

We demonstrate the practicality of CB for multi-qubit systems by using it to experimentally estimate the process fidelity of cycles acting globally on quantum registers containing 2, 4, 6, 8, and 10 qubits.
The specific cycles we consider consist of simultaneous local Pauli gates and multi-qubit entangling M{\o}lmer-S{\o}rensen (MS) gates~\cite{sorensen1999quantum, sorensen2000entanglement} combined with simultaneous local Pauli gates.
We confine $^{40}\mathrm{Ca}^{+}$ ions in a linear Paul-trap and encode a single qubit in the electronic states of each atomic ion. The encoding utilizes the $\ket{0}=4S_{1/2}(m_j=-1/2)$ ground-state and the $\ket{1}=3D_{5/2}(m_j=-1/2)$ metastable excited state. Our quantum computing toolbox comprises independent arbitrary single qubit operations and fully entangling $N$-qubit MS gates (see Supplementary Information). 
An experimental run consists of: (i) Doppler-cooling; (ii) sideband-cooling of two the motional modes with lowest frequencies; (iii) optical pumping to the initial state $\ket{0}^{\otimes N}$; (iv) coherent manipulation; and (v) readout of the ions. Each sequence is repeated 100 times to gather statistics (for experimental details see Supplementary Information 
and Ref.~\cite{schindler2013quantum}).

Under Markovian noise, the estimate of the process fidelity from \cref{eq:fidelityEstimate} is independent of the sequence lengths $m_1$ and $m_2$ used to estimate it (see Supplementary Information).  
We tested whether our experimental apparatus satisfied this assumption by performing measurements at three values of $m$ (4, 8, and 12) on a register containing 6 qubits and comparing the results obtained from pairs of sequence lengths against each other.
The data is tabulated in the Supplementary Information, 
where the variation of the estimated fidelities is within $0.1$\,\%, which is smaller than the corresponding uncertainties of $0.4$\,\%. This suggests that the errors are Markovian and the estimated process fidelity is independent of the chosen sequence lengths for our system and henceforth we only use two sequence lengths to estimate the process fidelity.

\begin{figure}[ht]
    \centering
    \includegraphics[width=0.5\textwidth]{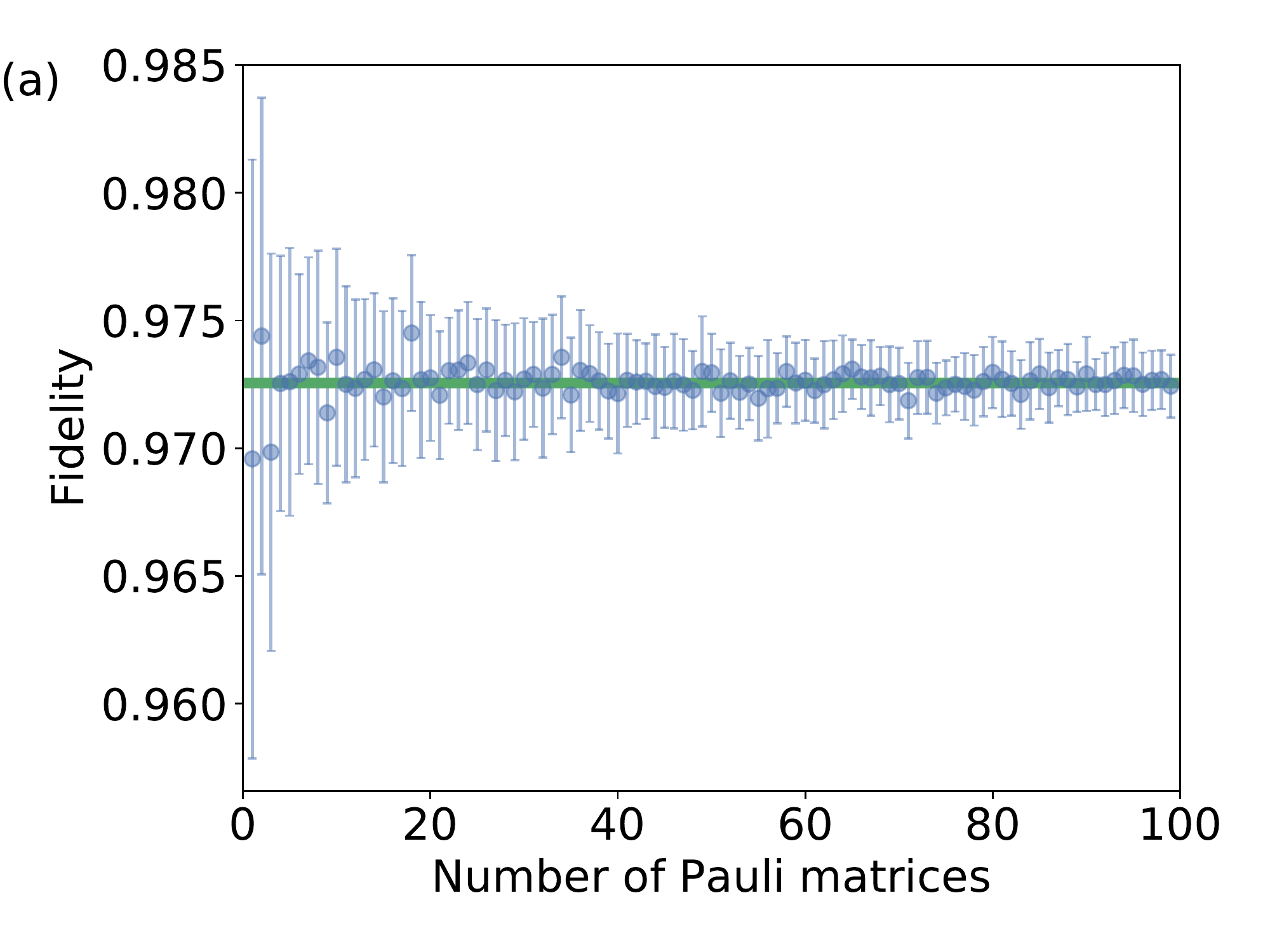}
    \qquad
    \includegraphics[width=0.5\textwidth]{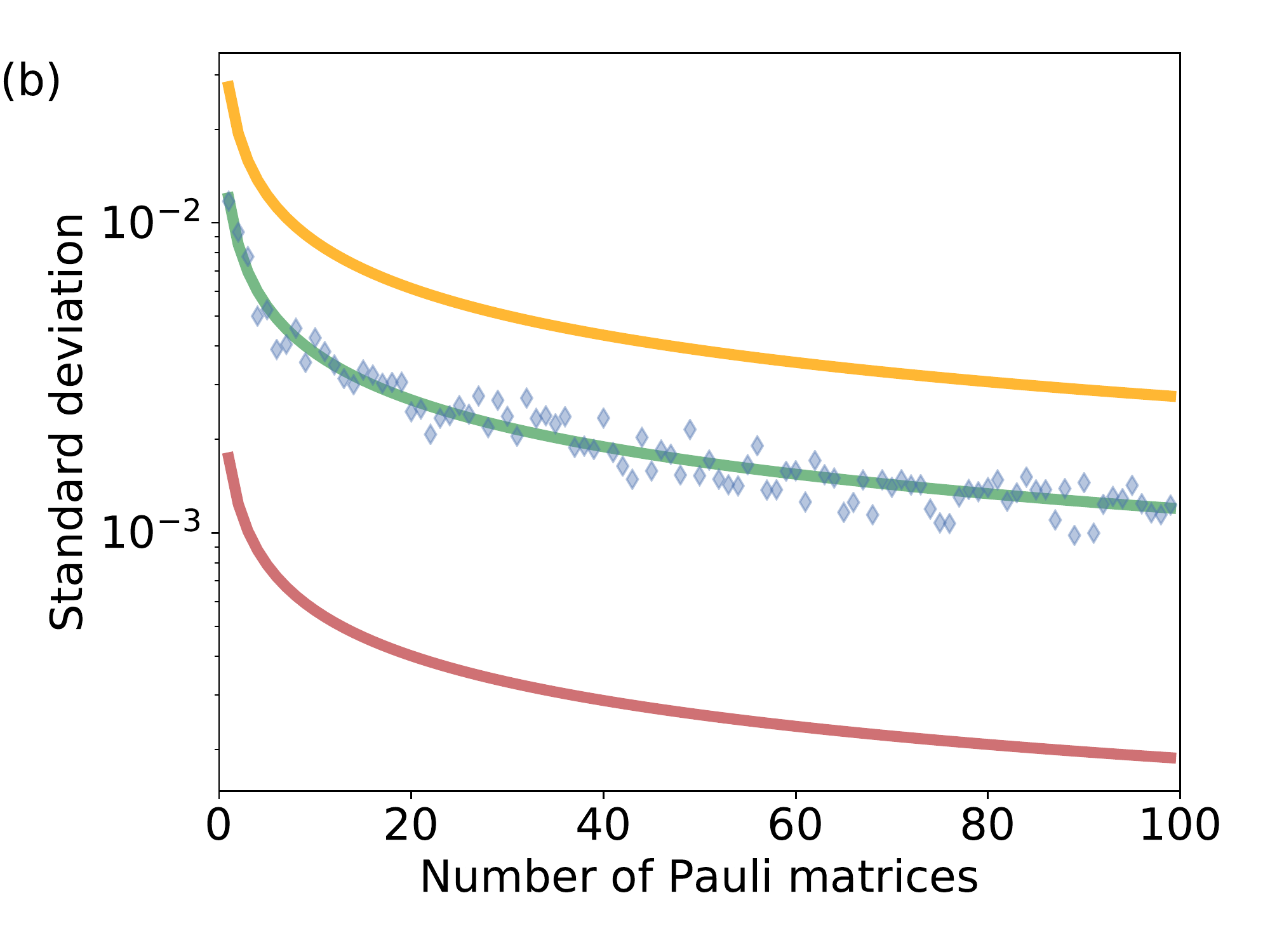}
    \caption{Experimental evidence demonstrating rapid convergence under finite sample size with favorable constant factors. (a) Mean fidelity estimates from 30 randomly sampled subsets of Pauli matrices as a function of the size of the subset. The error bars illustrate the standard deviation of the 30 samples, that is, the standard error of the mean. The green line describes the mean fidelity $\mathcal{F}=97.25(8)$\,\% calculated from the complete data set. (b) The standard deviation of the fidelity from plot (a) against $K$ including an upper bound in orange (see Supplementary Information), 
    a fit of the standard deviation data in green and a fit of the calculated projection noise in red.}
    \label{fig:subsampling}
\end{figure}

The CB protocol is practical to implement on large processors because the fidelity can be accurately estimated using a number of Pauli matrices that is \emph{independent} of the number of qubits (see Supplementary Information). 
To illustrate the rapid convergence under finite sample size, we performed CB of local Pauli operations on a 4 qubit register by exhaustively estimating all  $4^4-1=255$ possible decay rates. 
We estimate the average fidelities via \cref{eq:fidelityEstimate} for multiple subsets $\sf{P}$ of the set of all Pauli matrices.
For each $K=1,\ldots,100$, we evaluate the fidelity for 30 randomly chosen subsets $\sf{P}$ containing $|{\sf{P}}|=K$ Pauli matrices.
The mean and standard deviation of the estimated fidelities as functions of the subset size are shown in~\cref{fig:subsampling}.
The observed standard error of the mean $\sigma=0.0135(3)/\sqrt{K}$ is larger than the lower bound given by quantum projection noise,  $\sigma_\mathrm{proj}=0.00151(2)/\sqrt{K}$, but smaller than the upper bound $\sigma_\mathrm{bound}=0.0252(8)/\sqrt{K}$ on the contribution from sampling a finite number of Pauli matrices (see Supplementary Information). 
The data demonstrate that we can estimate the process fidelity $F$ to an uncertainty smaller than $(1-F)/\sqrt{K}$ using only $K \approx 20$ Pauli matrices with other experimental parameters held fixed (the parameters are listed in the Supplementary Information). 

We performed CB on local operations and with an interleaved MS gate on registers containing 2, 4, 6,
8, and 10 qubits. The process fidelity as a function of the number
of qubits in the register is shown in~\cref{fig:cb_results}
and~\cref{tab:cb_results}. While it is expected that the fidelity over the full register decreases with increasing register size,  an important question is whether the effective error rate per qubit increases, or significant cross-talk effects appear, with increasing numbers of qubits.  

\begin{figure}[ht]
    \centering
    \includegraphics[width=0.5\textwidth]{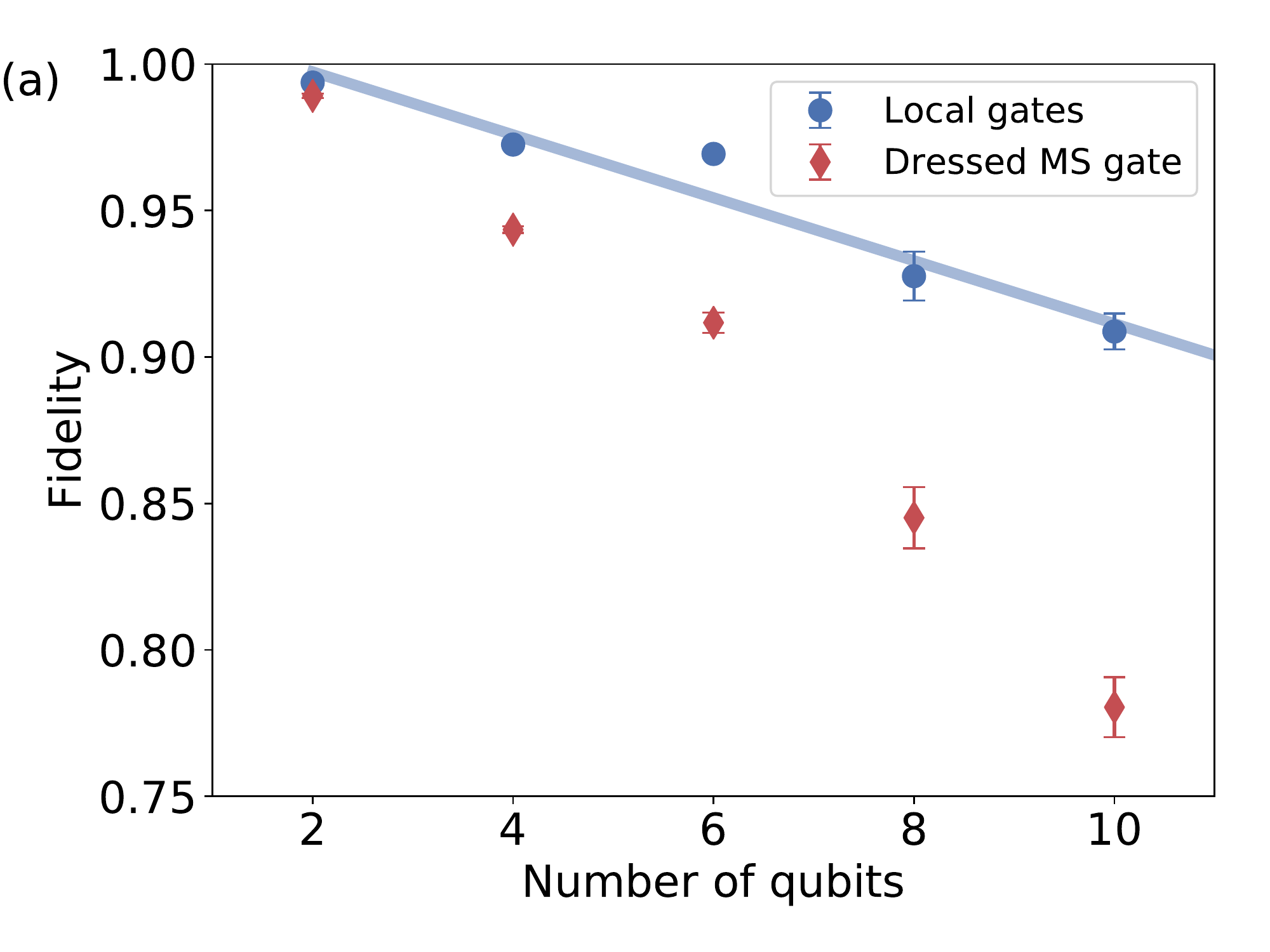}
    \qquad
    \includegraphics[width=0.5\textwidth]{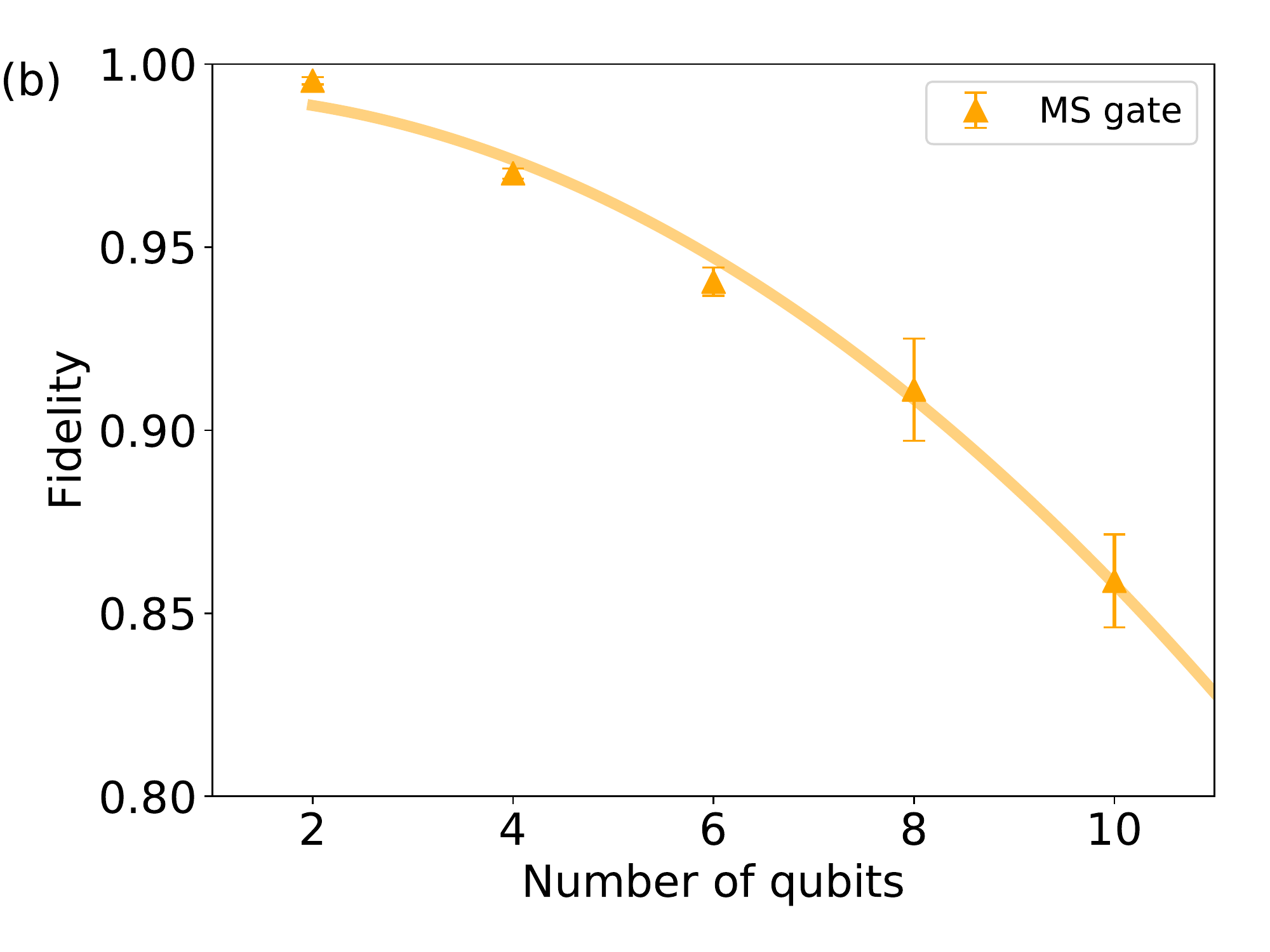}
    \caption{Experimental estimates of how rapidly error rates increase as the processor size increases. (a) Process fidelities obtained under CB for local gates (blue circles) and for sequences containing dressed MS gates (red diamonds), that is, MS gates composed with a random Pauli cycle, plotted against the number of qubits in the register. The local operations are consistent with independent errors fitted according to~\cref{eq:local_decay}. (b) Estimate of the process fidelity of an MS gate obtained by taking the ratio of dressed MS and local process fidelities. The data is fitted to~\cref{eq:ms_decay} and is consistent with a constant error per two-qubit coupling.}
    \label{fig:cb_results}
\end{figure}

We observe that the fidelity for local CB (blue circles in~\cref{fig:cb_results} (a)) decays linearly  with register size $N$, as
\begin{align}\label{eq:local_decay}
F=1-\epsilon_P N,
\end{align}
with $\epsilon_P=0.011(2)$.
The linear decay of the fidelity indicates that our single-qubit Pauli operations do not show increasing error rates per qubit or a significant onset of cross-talk errors as the register size increases. 
Each single-qubit Pauli operation requires $n_{\rm{S}}$ native gates, where on average $\langle n_{\rm{S}}\rangle=1.27$, independent of the system size.
Therefore the effective process fidelity of a native single-qubit gate is $1 - \epsilon_P / \langle n_{\rm{S}}\rangle =0.992(1)$. 

The CB measurements with interleaved MS gates give the process fidelity of the MS gate composed with a round of local randomizing gates as in \cref{eq:compositeFidelity} (a dressed MS gate, see red diamonds in~\cref{fig:cb_results} (a)).
This determines the error rate when a circuit is implemented by randomized compiling~\cite{Wallman2016a}.
The process fidelity of the interleaved gate can be estimated by the ratio of the dressed MS and local fidelities as in interleaved randomized benchmarking~\cite{Magesan2012}.
The resulting estimates are plotted in \cref{fig:cb_results} (b).
We note that these estimates may have a large systematic error that is on the same order as the overall error rate~\cite{Carignan-Dugas}.
This systematic uncertainty primarily arises due to coherent over- and under-rotations with similar rotation axes. 
The MS gate performs rotations around the non-local axes $\sigma^{(i)}_x \otimes \sigma^{(j)}_x$, which are substantially different from the single-qubit rotation axes.
Therefore it is unlikely that any coherent errors on the MS gate accumulate with the errors on the single-qubit rotations, and so we neglect this systematic error.
We conjecture that the process fidelity of the MS gate should decay quadratically due to an error in each of the $\binom{N}{2}$ couplings between pairs of qubits introduced by the MS gate.
If we assume an average error rate $\epsilon_2$ per two-qubit coupling, we can describe the MS gate fidelity as
\begin{align}\label{eq:ms_decay}
F_\mathrm{MS}=1-\epsilon_2\frac{N^2-N}{2} \, .
\end{align}
Fitting this model to the results in \cref{fig:cb_results}~(b) gives an estimated error per two-qubit coupling of $\epsilon_2=0.0030(2)$.
However, we cannot harness these two-qubit couplings individually in the experiment and thus they cannot be compared to individually available gates.

\begin{table}[ht]
\centering
\caption{Process fidelities estimated via CB (\%)}
\label{tab:cb_results}
\begin{tabular}{c|c|c|c}
Qubits & Local gates & Dressed MS gate & MS gate \\
\hline
2 & 99.37(7) & 98.92(8) & 99.6(1) \\
4 & 97.25(8) & 94.3(1) & 97.0(2) \\
6 & 96.9(2) & 91.2(3) & 94.1(4) \\
8 & 92.8(8) & 85(1) & 91(2) \\
10 & 90.9(6) & 78(1) & 86(2) \\
\end{tabular}
\end{table}

In summary, we have developed cycle benchmarking and demonstrated its practicality by implementing it on quantum registers containing
$N=2$, 4, 6, 8 and 10 qubits.
In comparison, a single random Clifford gate for 8 and 10 qubits would require more than 50 MS gates and so randomized benchmarking for 8 and 10 qubits would require a large number of measurements to achieve a useful statistical precision.
CB is practical in regimes where randomized benchmarking is impractical because it uses local randomizing gates. 
A similar approach was independently considered in \cite{Helsen2018, Xue2018} to characterize a two-qubit Clifford gate. 
However, the approach implemented here and proposed previously in Ref.~\cite{patentApp} can be applied in a scalable manner to processors with arbitrary numbers of qubits.

The total experimental time and post-processing resources required for our implementation were approximately independent of the number of qubits (see Supplementary Information), 
after accounting for the additional tests performed on specific numbers of qubits.
This is achieved because, as we prove in the Supplementary Information, 
the number $K$ of Pauli matrices that need to be sampled to estimate the fidelity is independent of the number of qubits and the fidelity.
In addition we demonstrated experimentally that the estimate of the fidelity and its error converges rapidly under finite sample size (\cref{fig:subsampling}), and that the estimated fidelities are approximately independent of the sequence lengths used. 

Cycle benchmarking can be readily implemented on general quantum computing architectures to estimate the fidelity of multi-qubit processes.
The fidelity corresponds to the effective error rate under randomized compiling~\cite{Wallman2018}.
The protocol also provides insight into how noise scales within a fixed architecture.
In our ion trap, the fidelity of local gates across the whole register decreased linearly with $N$, demonstrating that our native single-qubit gates have an average fidelity of $99.2(1)$\,\%
and do not deteriorate with the register size. Thus we have demonstrated a scalable method to validate a major
requirement for fault-tolerant quantum computation. 
In addition, we performed interleaved CB protocols to estimate the performance of the
multi-qubit entangling MS gate. From the ratio between the dressed MS and the local CB fidelities we infer entangling gate fidelities
ranging from $99.6(1)$\,\% to $86(2)$\,\% for 2 to 10 qubits.

\bibliography{rb}

\section*{Acknowledgements}
We gratefully acknowledge support by the Austrian Science Fund (FWF), through the SFB Fo-QuS (FWF Project No. F4002-N16), as well as the Institut f\"ur  Quanteninformation GmbH. This research was funded by the Office of the Director of National Intelligence  (ODNI), Intelligence Advanced Research Projects Activity (IARPA), through the Army Research Office grant W911NF-16-1-0070. All statements of fact, opinions or conclusions contained herein are those of the authors and should not be construed as  representing  the  official  views  or  policies  of IARPA,  the  ODNI,  or  the  U.S.  Government. We also  acknowledge  support  by  U.S.  A.R.O.through  grant  W911NF-14-1-0103
This research was undertaken thanks in part to funding from
TQT, CIFAR, the Government of Ontario, and the Government of Canada through CFREF, NSERC and Industry Canada.

\section*{Author contributions}
A.E., J.J.W., P.S., T.M., J.E. and R.B. wrote the manuscript and provided revisions. J.J.W., T.M. and P.S. developed the research based on discussions with J.E. and R.B.. J.J.W. and J.E. developed the theory. A.E., E.M. and P.S. performed the experiments. A.E., E.M., P.S., L.P., M.M. and R.S. contributed to the experimental setup. A.E. and J.J.W. analyzed the data. All authors contributed to discussions of the results and the manuscript.

\newcommand{\itmt}{ in the main text }
\newcommand{\eqcircuit}{\cref{eq:circuit} }
\newcommand{\eqpaulifidelity}{\cref{eq:PauliFidelity} }
\newcommand{\eqfidelityestimate}{\cref{eq:fidelityEstimate} }
\newcommand{\eqcompositefidelity}{\cref{eq:compositeFidelity} }
\newcommand{\eqprocessfidelity}{\cref{eq:processFidelity} }
\newcommand{\figcbcircuit}{\cref{fig:CBcircuit}}
\newcommand{\figsubsampling}{\cref{fig:subsampling} }

\newpage
\begin{widetext}
\Large\bfseries\centering
   Supplementary Information: Characterizing large-scale quantum computers via cycle benchmarking
\normalsize
\end{widetext}
  
\section{Theoretical methods}
In this section, we specify the state preparation and measurement (SPAM) procedures, obtain expressions for the expected values of steps in the protocol over the set of all Pauli matrices $\sf{P}^N = \{I,X,Y,Z\}^{\otimes N}$, and analyze the uncertainties in experimental estimates of those expected values.
We conclude by giving a simple expression for the ideal MS gate that facilitates the calculation of $\ideal{C}(P)$.

For this appendix only, we abuse notation slightly by implicitly defining the channel $\ideal{P}(A) = PAP^\dagger$ for any Pauli matrix $P$, so that we can use expressions such as $\sum_P \ideal{P}$.

\subsection{State preparation and measurement procedures}\label{supp:spam}

In our experiment, we can only directly perform noisy preparations and measurements in the $N$-qubit computational basis $\{|z\rangle:z\in\mathbb{Z}_2^N\}$.
We now specify the basis changes and coarse graining we use to perform other preparations and measurements.
For an $N$-qubit matrix $Q$ (e.g., $P$, $\ideal{C}(P)$ from the main text), let $\ideal{B}_Q$ rotate the computational basis to an eigenbasis of $Q$ such that
\begin{align}\label{eq:defineS}
    \sum_{z\in\mathbb{Z}_2^N} 2^{-N} \tr\left[\ideal{B}_Q(|z\rangle\!\langle z|)Q\right]\ideal{B}_Q(|z\rangle\!\langle z|) = Q.
\end{align}
For the processes we investigated, $\ideal{C}(P)$ is always an $N$-qubit Pauli matrix.
Therefore, we only need to prepare eigenstates of Pauli matrices $P$ and measure the expectation value of Pauli matrices $\ideal{C}(P)$.
Consequently, our SPAM procedures are fully specified by defining $\ideal{B}_Q$ for arbitrary Pauli matrices $Q$.
We choose to construct the $\ideal{B}_Q$ out of local Clifford operators to maximize the SPAM coefficients (which results in a smaller statistical uncertainty).
Specifically, let $P|_j$ denote the $j$th tensor factor of a matrix, $\ideal{A}_I = \ideal{A}_Z = \ideal{I}$ and 
\begin{align*}
    \ideal{A}_X(Z) = X, \quad \ideal{A}_X(X) = Y \\
    \ideal{A}_Y(Z) = Y, \quad \ideal{A}_Y(Y) = X.
\end{align*}
Then we choose the basis-changing gate for an $N$-qubit Pauli matrix $Q$ to be
\begin{align}
    \ideal{B}_Q = \bigotimes_{j=1}^N \ideal{A}_{Q|_j}.
\end{align}
Note that the basis changing procedure is independent of the sign of $Q$.

We now specify the coarse-graining procedure we use to measure the expectation value of observables.
Suppose a system is in a state $\rho$ and let $\rm{Pr}(z|Q)$ be the probability of observing the computational basis outcome $z$ after applying the process $\ideal{B}_Q^\dagger$.
One measures the expectation value of $Q$ [e.g., $Q = \ideal{C}(P)$] by applying $\ideal{B}_{Q}^\dagger$, measuring in the computational basis, and averaging the probabilities of the outcomes weighted by the coefficients $\tr\left[\ideal{B}_Q(|z\rangle\!\langle z|)Q\right]$, where the weights are computed from the ideal quantities.
From \cref{eq:defineS} and by the linearity of the trace,
\begin{align}\label{eq:expectationValue}
\tr[Q\rho]
&= \sum_{z\in\mathbb{Z}_2^N} 2^{-N} \tr\left[\ideal{B}_Q(|z\rangle\!\langle z|)Q\right]\rm{Pr}(z|Q).
\end{align}
Note that as we average the relative frequencies over all outcomes, the number of measurements required to estimate the expectation value of $Q$ to a fixed additive precision is independent of the number of qubits $N$ by a standard application of, e.g., Hoeffding's inequality~\cite{Hoeffding1963}.

The above estimation procedure will include several sources of SPAM error per qubit, including errors in qubit initialization, measuring qubits in the computational basis, and in the local processes used to change the basis.
Consequently, a protocol has to be robust to SPAM errors to provide a practical characterization of a multi-qubit gate.

\subsection{Modelling the decay as a function of the sequence length}\label{app:pauliDecays}

We now determine the expected value of $\sum_{l=1}^L f_{P,m,l}/L$ for fixed values of $P$ and $m$ under gate-independent Markovian noise on the random Pauli gates.
As in randomized compiling~\cite{Wallman2016a}, the noise on the gate of interest can be an arbitrary Markovian process.
The assumption of gate-independent noise on the random Pauli gates is weaker than the corresponding assumption in randomized benchmarking, namely, that the noise over the whole $N$-qubit Clifford group is independent of the target.
This assumption can be relaxed using the analysis of Ref.~\cite{Wallman2016a} at the cost of more cumbersome notation.

\begin{thm}\label{thm:decay}
Let \ideal{G} be a Clifford cycle and \noisy{G} be an implementation of \ideal{G} with Markovian noise.
Suppose there exists a process \ideal{A} such that $\noisy{R} = \ideal{A}\ideal{R}$ for any Pauli process \ideal{R}. 
Then for a fixed Pauli matrix $P$ and positive integer $m$, the expected value of $f_{P,m,l}$ from step 3c of the protocol over all random Pauli processes $\ideal{R}_0$, \ldots, $\ideal{R}_m$ is
\begin{align*}
\langle f_{P,m,l} \rangle = \beta \prod_{j=0}^{m-1} F_{\ideal{G}^j(P)}(\ideal{E}, \ideal{I}),
\end{align*}
where $\ideal{E} = \ideal{G}^\dagger \noisy{G}\ideal{A}$ and $\beta$ is a scalar that depends only on $P$ and $\ideal{G}^m(P)$.
Moreover, $\beta = 1$ in the absence of SPAM errors.
\end{thm}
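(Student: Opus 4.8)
The plan is to work in the Liouville (Pauli-transfer-matrix) picture, representing each Pauli matrix $Q$ by a normalized vector $|Q\rangle\rangle$ with $\langle\langle Q|Q'\rangle\rangle = 2^{-N}\tr[QQ'] = \delta_{Q,Q'}$, so that a channel $\ideal{N}$ has matrix elements $(\ideal{N})_{Q,Q'} = 2^{-N}\tr[Q\ideal{N}(Q')]$ and the Pauli fidelity of \cref{eq:PauliFidelity} is exactly the diagonal element $F_Q(\ideal{E},\ideal{I}) = (\ideal{E})_{Q,Q}$. First I would write $f_{P,m,l}$ as the overlap of a noisy preparation vector encoding $\rho$ with the noisy circuit applied to it, followed by the effective noisy measurement vector that implements the coarse-grained estimator of $\ideal{C}(P)$ from \cref{eq:expectationValue}; the state-preparation and measurement imperfections are thereby collected into the two boundary vectors. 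Substituting $\noisy{R}_i = \ideal{A}\ideal{R}_i$ and using $\noisy{G}\ideal{A} = \ideal{G}\ideal{E}$ (a rearrangement of the definition $\ideal{E} = \ideal{G}^\dagger\noisy{G}\ideal{A}$), the bulk of the noisy circuit rewrites as $m$ repetitions of the block $\ideal{G}\ideal{E}$ interleaved with the ideal random Paulis $\ideal{R}_0,\ldots,\ideal{R}_{m-1}$, with a single leftover noise factor $\ideal{A}$ (from $\ideal{R}_m$) pushed against the measurement vector.

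Second comes the twirl. Since $\ideal{G}$ is Clifford, each $\ideal{G}$ maps a Pauli vector to a signed Pauli vector, so I can propagate the ideal cycles through the product while tracking the induced frames $P_j := \ideal{G}^j(P)$ together with the accompanying $\pm1$ signs. This leaves each dressed error $\ideal{E}$ sandwiched between two independent, uniformly random Pauli processes. The central mechanism is that the sign of the ideal outcome $\ideal{C}(P)$ — computed in step 3b and folded into the measurement vector — is itself the product of the eigenvalues the random Paulis take on the ideal frames $P_j$; when this sign is multiplied against the matching sign factors generated by the same random Paulis inside the noisy circuit and then averaged, the average of each pair of $\pm1$ eigenvalues enforces frame consistency, vanishing unless the error $\ideal{E}$ at step $j$ returns the frame $P_{j-1}$ to itself. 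This is precisely the statement that the independent Pauli averaging implements a Pauli twirl $4^{-N}\sum_{\ideal{R}}\ideal{R}\ideal{E}\ideal{R}$, which is diagonal in the Pauli basis, so only the diagonal element $(\ideal{E})_{P_{j-1},P_{j-1}} = F_{\ideal{G}^{j-1}(P)}(\ideal{E},\ideal{I})$ survives at each step.

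Third, collecting the $m$ surviving diagonal factors yields the product $\prod_{j=0}^{m-1} F_{\ideal{G}^j(P)}(\ideal{E},\ideal{I})$, and everything not amplified by the sequence — the $P$-component of the noisy preparation, the $\ideal{G}^m(P)$-component of the noisy measurement, and the leftover $\ideal{A}$ — multiplies out into the single scalar $\beta$. Because the only frames entering these boundary objects are the initial frame $P$ and the final frame $\ideal{G}^m(P)$, $\beta$ depends only on these two Paulis; when the preparation and measurement are ideal, both boundary overlaps equal one and $\beta = 1$.

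The main obstacle is the sign-and-frame bookkeeping in the second step: one must show carefully that the deterministic signs carried by the ideal $\ideal{C}(P)$ combine with the random signs generated as the Paulis are commuted through the Clifford cycles, so that after averaging over the $m+1$ independent Paulis each dressed error is genuinely Pauli-twirled on the \emph{correct} frame and no off-diagonal term coupling distinct Pauli frames survives. Arranging the frame indices so that the amplified product runs over $P_0,\ldots,P_{m-1}$ while the residual $\ideal{A}$ and the two SPAM overlaps land cleanly in $\beta$ is the delicate part; the remainder is the diagonalization lemma for the Pauli twirl and routine linear algebra.
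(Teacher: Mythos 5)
Your proposal is correct and follows essentially the same route as the paper: absorb the randomizer noise into $\ideal{A}$, insert $\ideal{G}^\dagger\ideal{G}$ to define $\ideal{E}=\ideal{G}^\dagger\noisy{G}\ideal{A}$, average over the interleaved random Paulis to reduce each $\ideal{E}$ to its diagonal (Pauli-twirled) element $F_{\ideal{G}^j(P)}(\ideal{E},\ideal{I})$ on the propagated frame, and collect the preparation, measurement, and residual $\ideal{A}$ overlaps into $\beta$. The only cosmetic difference is that you implement the twirl by tracking the ideal sign of $\ideal{C}(P)$ against the signs generated in the noisy circuit and invoking character orthogonality at every step, whereas the paper relabels the randomizers into explicit conjugations $\ideal{T}_i^\dagger\ideal{E}\ideal{T}_i$ and reserves the character-orthogonality projection (its Lemma~2) for the final randomizer $\ideal{T}_m$; these are two equivalent bookkeepings of the same averaging.
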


\begin{proof}
Substituting $\noisy{R}_i = \ideal{A}\ideal{R}_i$ into the noisy version of \eqcircuit (i.e., overset each operator with a $\sim$), the average superoperator applied over all sequences for a fixed choice of random sequences is
\begin{align}
    \noisy{C} = \ideal{A}\ideal{R}_m\noisy{G}\ldots \ideal{A}\ideal{R}_1\noisy{G} \ideal{A}\ideal{R}_0.
\end{align}
Inserting $\ideal{G}\ideal{G}^\dagger$ between the ideal Pauli processes $\ideal{R}_i$ and the adjacent \noisy{G} gives
\begin{align}
    \noisy{C} = \ideal{A} \ideal{R}_m\ideal{G}\ideal{E}\ldots \ideal{R}_1\ideal{G}\ideal{E}\ideal{R}_0
\end{align}
where $\ideal{E} = \ideal{G}^\dagger \noisy{G}\ideal{A}$.
We can now do a standard relabelling of the randomizing gates to obtain a twirl by setting $\ideal{T}_0 = \ideal{R}_0$ and recursively defining
\begin{align}
    \ideal{R}_i = \ideal{T}_i \ideal{G}\ideal{T}_{i-1}^\dagger \ideal{G}^\dagger
\end{align}
for $i>0$.
With this relabelling,
\begin{align}
    \noisy{C} = \ideal{A}\ideal{T}_m \ideal{G}\ideal{T}_{m-1}^\dagger\ideal{E}\ideal{T}_{m-1}\ldots \ideal{T}_1^\dagger\ideal{E}\ideal{T}_1\ideal{G}\ideal{T}_0^\dagger\ideal{E}\ideal{T}_0.
\end{align}
The $\ideal{T}_i$ are all Pauli processes because $\ideal{G}\ideal{P}\ideal{G}^\dagger$ is a Pauli process for any Pauli process $\ideal{P}$ and any Clifford process $\ideal{G}$.
Moreover, the $\ideal{T}_i$ are uniformly random because the Pauli processes are sampled uniformly at random and form a group.
Therefore averaging independently over all $\ideal{T}_0$, \ldots, $\ideal{T}_{m-1}$ for a fixed choice of $\ideal{T}_m$ results in the effective superoperator
\begin{align}\label{eq:averageCircuit}
    \ideal{A}\ideal{T}_m(\ideal{G}\noisy{E})^m,
\end{align}
where
\begin{align}\label{eq:pauliTwirl}
    \noisy{E} = 4^{-N} \sum_{P\in\sf{P}^N} \ideal{P}^\dagger \ideal{E} \ideal{P}.
\end{align}
Now note that \noisy{E} is invariant under conjugation by Pauli operators and so $\noisy{E}(Q) \propto Q$ for all $Q\in\sf{P}^N$~\cite{Holevo2005}.
As the Pauli matrices form a trace-orthogonal basis for the set of matrices, 
\begin{align}\label{eq:Covariant}
    \noisy{E}(Q) 
    &= 2^{-N}\tr\left[Q^\dagger \noisy{E}(Q)\right] Q \notag\\
    &= 4^{-N} \sum_{P\in\sf{P}^N} 2^{-N} \tr\left[Q\ideal{P}^\dagger \ideal{E} \ideal{P}(Q)\right]Q \notag\\
    &= 4^{-N} \sum_{P\in\sf{P}^N} 2^{-N} \tr\left[\ideal{P}(Q)\ideal{E} \ideal{P}(Q)\right]Q \notag\\
    &= 4^{-N} \sum_{P\in\sf{P}^N} 2^{-N} \tr\left[Q\ideal{E}(Q)\right]Q \notag\\
    &= F_Q(\ideal{E}, \ideal{I})Q,
\end{align}
for any $Q\in \sf{P}^N$, where we have used the fact that $\ideal{P}(Q) = PQP^\dagger = \pm Q$ for any Pauli matrices $P,Q$ and \eqpaulifidelity.

For any two Pauli matrices $P,Q\in\sf{P}^N$, let
\begin{align}
\eta(Q,P) = \begin{cases}
1 & \mbox{ if } QP = PQ \\
-1 & \mbox{otherwise.}\\
\end{cases}
\end{align}
Then, from \cref{eq:averageCircuit} with $P' = \ideal{G}^m(P)$ for convenience, the expected outcome of the ideal circuit is $\ideal{C} = \eta(T_m, P;) P'$.
Now note that under measurement errors and noisy changes of basis [i.e., errors in the $\mathrm{Pr}(z|Q)$] and folding the residual \ideal{A} into the measurement, \cref{eq:expectationValue} gives the expectation value of some operator $\tilde{P}'$ (which is not uniquely defined).
Since only the weights in \cref{eq:expectationValue} depend on the sign of $P'$ and are calculated from the ideal expressions, the noisy measurement for $-P'$ gives the expectation value of $-\tilde{P}'$ by linearity.

Let $\rho$ be the prepared state after applying a noisy change of basis.
Then the expectation value of $f_{P,m,l}$ in step 3c over all sequences is
\begin{align}\label{eq:av}
    \langle f_{P,m,l} \rangle &= 4^{-N} \sum_{T_m\in\sf{P}^N} \eta(T_m,P')\tr \left[\ideal{T}_m^\dagger(\tilde{P}') (\ideal{G}\noisy{E})^m(\rho)\right] \notag\\
    &= \alpha_P\tr\left[P' (\ideal{G}\noisy{E})^m(\rho)\right]
\end{align}
by \cref{lem:project} below, where $\alpha_P = 2^{-N}\tr[P \tilde{P}']$ is 1 in the absence of errors.

Expanding $\rho = \sum_{Q\in\sf{P}^N} \rho_Q Q$ and noting that \ideal{G} is a Clifford cycle, \cref{eq:av} reduces to
\begin{align}
 \langle f_{P,m,l} \rangle = \sum_{Q\in\sf{P}^N} \alpha_P \rho_Q 
    \tr \left[P'\ideal{G}^m(Q)\right]\prod_{j=0}^{m-1} F_{\ideal{G}^j(Q)}(\ideal{E}, \ideal{I}).
\end{align}
As the Pauli matrices are trace-orthogonal and $P' = \ideal{G}^m(P)$, $\tr \left[\ideal{G}^m(Q)P'\right] = 2^N \delta_{Q,P}$.
Therefore
\begin{align}\label{eq:fidelityDecay}
 \langle f_{P,m,l} \rangle = 2^N \alpha_P \rho_P 
    \prod_{j=0}^{m-1} F_{\ideal{G}^j(P)}(\ideal{E}, \ideal{I}),
\end{align}
where $\rho_P = 2^{-N}$ in the absence of SPAM errors, so that $\beta=2^N \alpha_P \rho_P = 1$ in the absence of SPAM errors.
\end{proof}

In the above proof, we make use of the following lemma proven and applied to randomized benchmarking in Ref.~\cite{Helsen2018}.

\begin{lem}\label{lem:project}
For any matrix $M$ and any Pauli matrix $P$,
\begin{align*}
    4^{-N} \sum_{Q\in\sf{P}^N} \eta(Q,P)\ideal{Q}(M) = 2^{-N}\tr\left[P M\right] P.
\end{align*}
\end{lem}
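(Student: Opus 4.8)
The plan is to expand $M$ in the $N$-qubit Pauli basis and exploit the fact that conjugation by a Pauli acts diagonally on that basis. Writing $M = \sum_{R\in\sf{P}^N} m_R\,R$ with $m_R = 2^{-N}\tr[R M]$ (using $R^\dagger = R$), I would use the elementary identity $\ideal{Q}(R) = QRQ^\dagger = \eta(Q,R)\,R$, which holds because any two Pauli matrices either commute or anticommute and each Pauli squares to the identity. Substituting this and interchanging the two sums turns the left-hand side into $4^{-N}\sum_{R} m_R\bigl(\sum_{Q\in\sf{P}^N}\eta(Q,P)\eta(Q,R)\bigr)R$, so the entire statement reduces to evaluating the scalar sign-sum $\Sigma(P,R) = \sum_{Q\in\sf{P}^N}\eta(Q,P)\eta(Q,R)$ for each fixed pair $P,R$.

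The key step is to show that $\Sigma(P,R) = 4^N\delta_{P,R}$. First I would observe that commutation signs are multiplicative, $\eta(Q,P)\eta(Q,R) = \eta(Q,PR)$: from $Q(PR) = \eta(Q,P)\eta(Q,R)(PR)Q$ one reads off that $Q$ commutes with the product $PR$ exactly when the two signs agree, and this is insensitive to the overall phase relating $PR$ to an element of $\sf{P}^N$. Hence $\Sigma(P,R) = \sum_{Q}\eta(Q,PR)$. When $R=P$ we have $PR = P^2 = I$, every term is $+1$, and $\Sigma(P,P) = 4^N$. When $R\neq P$ the product $PR$ is proportional to a nonidentity Pauli, and I would invoke the standard fact that the centralizer of any nonidentity Pauli inside $\sf{P}^N$ has index two, so that exactly $4^N/2$ of the $Q$ commute and $4^N/2$ anticommute, giving $\Sigma(P,R)=0$.

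Combining these, only the $R=P$ term survives, and $4^{-N}\cdot 4^N m_P P = m_P P = 2^{-N}\tr[PM]\,P$, which is the claim. I expect the main obstacle to be making the ``exactly half commute'' count rigorous; the cleanest way is to note that for any nonidentity Pauli $S$ the map $\eta(\cdot,S)\colon \sf{P}^N\to\{\pm1\}$ is a group homomorphism (multiplicativity in the first argument follows by the same phase-insensitive commutation argument) which is nontrivial, so its kernel has index two and the signs sum to zero. Equivalently one can encode $\eta(Q,P) = (-1)^{\langle q,p\rangle}$ via the symplectic form on $\mathbb{Z}_2^{2N}$ and evaluate $\sum_{q}(-1)^{\langle q,\,p\oplus r\rangle}$ as a character sum that vanishes by nondegeneracy unless $p=r$. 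Everything else is routine linearity and trace-orthogonality of the Pauli basis.
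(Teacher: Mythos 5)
Your proposal is correct and follows essentially the same route as the paper's proof: expand $M$ in the trace-orthogonal Pauli basis, use $\ideal{Q}(R)=\eta(Q,R)R$ to reduce everything to the sign sum $\sum_{Q}\eta(Q,P)\eta(Q,R)$, and show that this sum is $4^N\delta_{P,R}$. The only cosmetic difference is that the paper justifies the vanishing of the cross terms by citing Schur's orthogonality relations for the inequivalent one-dimensional representations $\eta(\cdot,P)$, whereas you prove the same fact directly via multiplicativity $\eta(Q,P)\eta(Q,R)=\eta(Q,PR)$ and the index-two centralizer of a nonidentity Pauli --- the same argument in elementary dress.
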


\begin{proof}
As the Pauli matrices form an orthogonal basis for the space of matrices, we can write 
\begin{align}
    M = \sum_{R\in \sf{P}^{\otimes N}} m_R R,
\end{align}
where $m_R = 2^{-N} \tr(R M)$.
As $\ideal{Q}(R) = \eta(Q, R) R$ for any Pauli matrix $R$,
\begin{align}
4^{-N} \sum_{Q\in\sf{P}^N} \eta(Q, P)\ideal{Q}(M) 
= \sum_{R\in \sf{P}^{\otimes N}} m_R (\eta_P \cdot \eta_R) R
\end{align}
by linearity, where
\begin{align}
    \eta_P\cdot \eta_R = 4^{-N} \sum_{Q\in\sf{P}^N} \eta(Q,R)\eta(P,R).
\end{align}
As $\eta(Q,P)$ is a real 1-dimensional representation of the Pauli group for any fixed Pauli matrix $P$ and $\eta(Q,P)$ and $\eta(Q,R)$ are inequivalent as representations for $P\neq R$, 
\begin{align}
    4^{-N} \sum_Q \eta(P^{(m)}, Q)\eta(P, Q)
    &= \delta(P,R)
\end{align}
by Schur's orthogonality relations.
\end{proof}

\subsection{Estimating the process fidelity}\label{app:precision}

We now prove that the expectation value of \eqfidelityestimate provides an accurate, yet conservative, estimate of the process fidelity in \eqcompositefidelity under the same assumptions as in \eqpaulifidelity.

\begin{thm}\label{thm:precision}
Let
\begin{align*}
    \hat{F} &= 4^{-N} \sum_{P\in\sf{P}^N} \left(\frac{\langle f_{P,m_{2},l} \rangle }{\langle f_{P,m_{1},l} \rangle }\right)^{\frac{1}{m_{2}-m_{1}}}
\end{align*}
be the expected outcome of the cycle benchmarking protocol over all randomizations.
Let \ideal{G} be a Clifford cycle and \noisy{G} be an implementation of \ideal{G} with Markovian noise.
Suppose there exists a process \ideal{A} such that $\noisy{R} = \ideal{A}\ideal{R}$ for any Pauli process \ideal{R}. 
Then $\hat{F} \leq F_{\mathrm{RC}}(\noisy{G},\ideal{G})$ and
\begin{align*}
    \hat{F} - F_{\mathrm{RC}}(\noisy{G},\ideal{G}) = \mathcal{O}\left([1-F_{\mathrm{RC}}(\noisy{G},\ideal{G})]^2\right).
\end{align*}
\end{thm}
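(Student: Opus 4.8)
The plan is to recast both $\hat{F}$ and $F_{\mathrm{RC}}(\noisy{G},\ideal{G})$ as means of the Pauli fidelities $\lambda_Q := F_Q(\ideal{E},\ideal{I})$ of the effective error channel $\ideal{E}=\ideal{G}^\dagger\noisy{G}\ideal{A}$, and then to compare an arithmetic with a geometric mean via the AM--GM inequality. First I would apply \cref{thm:decay}, which gives $\langle f_{P,m,l}\rangle = \beta\prod_{j=0}^{m-1}\lambda_{\ideal{G}^{j}(P)}$ with $\beta$ depending only on $P$ and $\ideal{G}^{m}(P)$. Because step 2 of the protocol forces $\ideal{G}^{m_1}=\ideal{G}^{m_2}=\ideal{I}$, we have $\ideal{G}^{m_1}(P)=\ideal{G}^{m_2}(P)=P$, so the SPAM prefactor $\beta$ is identical for $m_1$ and $m_2$ and cancels in the ratio. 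Re-indexing with $\ideal{G}^{m_1}=\ideal{I}$ leaves $\langle f_{P,m_2,l}\rangle/\langle f_{P,m_1,l}\rangle = \prod_{k=0}^{M-1}\lambda_{\ideal{G}^{k}(P)}$ with $M=m_2-m_1$; since $M$ is a multiple of the order of $\ideal{G}$, each element of the $\ideal{G}$-orbit of $P$ appears equally often, and the $P$-th summand of $\hat{F}$ becomes exactly the geometric mean of $\lambda$ over that orbit.

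Next I would show that $F_{\mathrm{RC}}(\noisy{G},\ideal{G})=4^{-N}\sum_{P}\lambda_P$ is the corresponding arithmetic mean. Substituting $\noisy{R}=\ideal{A}\ideal{R}$ and $\noisy{G}\ideal{A}=\ideal{G}\ideal{E}$ into \eqpaulifidelity gives $F_P(\noisy{G}\noisy{R},\ideal{G}\ideal{R})=2^{-N}\tr[\ideal{G}\ideal{R}(P)\,\ideal{G}\ideal{E}\ideal{R}(P)]$; the two copies of $\ideal{R}(P)=\pm P$ contribute a square $+1$, and cyclicity of the trace together with unitarity of $\ideal{G}$ collapse this to $2^{-N}\tr[P\,\ideal{E}(P)]=\lambda_P$, independent of $\ideal{R}$. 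Averaging over $\ideal{R}$ in \eqcompositefidelity and over $P$ then yields $F_{\mathrm{RC}}=4^{-N}\sum_{P}\lambda_P$.

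With $\sf{P}^N$ partitioned into $\ideal{G}$-orbits $O_i$ of size $s_i$, every $P\in O_i$ produces the same orbit geometric mean, so $\hat{F}=4^{-N}\sum_i s_i\,\bigl(\prod_{Q\in O_i}\lambda_Q\bigr)^{1/s_i}$ while $F_{\mathrm{RC}}=4^{-N}\sum_i s_i\cdot\tfrac{1}{s_i}\sum_{Q\in O_i}\lambda_Q$. The bound $\hat{F}\le F_{\mathrm{RC}}$ then follows orbit by orbit from AM--GM applied to the nonnegative numbers $\{\lambda_Q\}_{Q\in O_i}$ (all positive in the near-identity regime of interest). For the accuracy claim, write $\lambda_Q=1-\delta_Q$ and expand each orbit geometric mean to first order, $\bigl(\prod_{Q\in O_i}(1-\delta_Q)\bigr)^{1/s_i}=1-\tfrac{1}{s_i}\sum_{Q\in O_i}\delta_Q+\mathcal{O}(\delta^2)$; using $\sum_i s_i=4^N$ this gives $\hat{F}=1-4^{-N}\sum_Q\delta_Q+\mathcal{O}(\delta^2)$, which matches $F_{\mathrm{RC}}=1-4^{-N}\sum_Q\delta_Q$ to first order. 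Since $1-F_{\mathrm{RC}}=4^{-N}\sum_Q\delta_Q$ is itself first order in the deviations $\delta_Q=1-\lambda_Q$, the difference is second order, i.e. $\hat{F}-F_{\mathrm{RC}}=\mathcal{O}([1-F_{\mathrm{RC}}]^2)$; quantitatively the orbit-wise gap equals $\tfrac{1}{2}\mathrm{Var}_{O_i}(\delta)+\mathcal{O}(\delta^3)$, which after summation is bounded by $\tfrac{1}{2}4^{-N}\sum_Q\delta_Q^2+\mathcal{O}(\delta^3)$.

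The main obstacle I anticipate is bookkeeping rather than any deep inequality: establishing cleanly that the SPAM factor $\beta$ is genuinely common to $m_1$ and $m_2$ (which relies entirely on $\ideal{G}^{m_1}=\ideal{G}^{m_2}=\ideal{I}$), and verifying that the $\ideal{R}$-dependence in $F_{\mathrm{RC}}$ cancels so that it is exactly the arithmetic mean of the $\lambda_Q$. Once the problem is phrased as \emph{mean of orbit geometric means versus the arithmetic mean}, both the one-sided bound and the second-order agreement are consequences of AM--GM and a first-order Taylor expansion; the only genuine subtlety is interpreting $\mathcal{O}([1-F_{\mathrm{RC}}]^2)$ as being quadratic in the full collection of Pauli-fidelity deviations $\{\delta_Q\}$, so that the implied constant is meaningful in the regime where all $\delta_Q$ are comparably small.
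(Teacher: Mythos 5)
Your overall route is the same as the paper's: use \cref{thm:decay} to write each summand of $\hat F$ as a geometric mean of the Pauli fidelities $\lambda_Q = F_Q(\ideal{E},\ideal{I})$ over the $\ideal{G}$-orbit of $P$, identify $F_{\mathrm{RC}}(\noisy{G},\ideal{G})$ with the arithmetic mean $4^{-N}\sum_Q\lambda_Q = F(\ideal{E},\ideal{I})$, and compare the two by AM--GM plus a Taylor expansion. (The paper phrases the orbit bookkeeping via the relative frequencies $\omega(Q\mid P,\delta m)$ and the identity $\sum_P\omega(Q\mid P,\delta m)=1$, which is exactly your ``each orbit element appears equally often'' observation.)

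There is, however, one genuine gap in your error-order claim. You correctly reduce the discrepancy to (at most) $\tfrac12\,4^{-N}\sum_Q\delta_Q^2$ with $\delta_Q = 1-\lambda_Q$, but this quantity is \emph{not} automatically $\mathcal{O}\bigl([1-F_{\mathrm{RC}}]^2\bigr) = \mathcal{O}\bigl([4^{-N}\sum_Q\delta_Q]^2\bigr)$: if a single $\delta_Q$ were of order one while the rest vanished, the former would be $4^{-N}$ and the latter $4^{-2N}$. You flag this as an interpretive subtlety (``the regime where all $\delta_Q$ are comparably small''), but the theorem is stated unconditionally, so the uniform control of each individual $\delta_Q$ by the average must be proved, not assumed. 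The paper supplies exactly this as \cref{lem:pauli_infidelities}: writing the Pauli twirl $\noisy{E}$ of $\ideal{E}$ as a probabilistic mixture $\sum_Q p(Q)\,QAQ^\dagger$ (Pauli covariance), one gets $F_P(\ideal{E},\ideal{I}) = \sum_{Q:[Q,P]=0}2p(Q)-1 \ge 2p(I)-1 = 2F(\ideal{E},\ideal{I})-1$, i.e.\ $\delta_Q \le 2(1-F_{\mathrm{RC}})$ for every $Q$, whence $4^{-N}\sum_Q\delta_Q^2 \le 2(1-F_{\mathrm{RC}})\cdot 4^{-N}\sum_Q\delta_Q = 2(1-F_{\mathrm{RC}})^2$. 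Incidentally, the same lemma is what guarantees $\lambda_Q\ge 2F-1>0$ when $F>1/2$, which is the justification you need (and currently only gesture at) for applying AM--GM to nonnegative numbers.
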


\begin{proof}
First, recall that the process fidelity is linear and for any unitary process \ideal{U},
\begin{align*}
    F(\noisy{G},\ideal{U}) = F(\ideal{U}^\dagger \noisy{G},\ideal{I}).
\end{align*}
Therefore from \eqcompositefidelity,
\begin{align*}
    F_{\mathrm{RC}}(\noisy{G}, \ideal{G})
    &= 4^{-N} \sum_{R\in\sf{P}^N} F(\noisy{G}\noisy{R}, \ideal{G}\ideal{R}) \\
    &= 4^{-N} \sum_{R\in\sf{P}^N} F(\ideal{R}\ideal{G}^\dagger\noisy{G}\ideal{A}\ideal{R}, \ideal{I})\\
    &= F(\noisy{E}, \ideal{I}).    
\end{align*}
Moreover, $F(\ideal{E},\ideal{I}) = F(\noisy{E},\ideal{I})$ by \eqprocessfidelity and \cref{eq:Covariant},
and so we will prove statements for $F(\ideal{E},\ideal{I})$.

Now fix a Pauli matrix $P$ and note that if $m_1$ and $m_2= m_1+\delta\!m$ are chosen so that $P'=\ideal{G}^{m_2}(P)=\ideal{G}^{m_1}(P)$ (guaranteed by step 2 of the protocol), then
\begin{align}
    \left(\frac{\langle f_{P,m_2,l}\rangle}{\langle f_{P,m_1,l}\rangle}\right)^{1/\delta\!m} = \prod_{j=0}^{\delta\!m - 1} F_{\ideal{G}^j(P')}(\ideal{E}, \ideal{I})^{1/\delta\!m}
\end{align}
by \cref{thm:decay}, as the scalar is the same for $m_1$ and $m_2$.
That is, the terms being averaged over in \eqfidelityestimate are themselves geometric means of $F_Q(\noisy{E}, \noisy{I})$ for different Pauli matrices $Q$ obtained by applying $\ideal{G}$ to the sampled $P$.
Formally, let $w(Q|P',\delta\!m)$ be the relative frequency of $Q$ in the list $(\ideal{G}^j(P'):j=0,\ldots, \delta\!m-1)$.
Then
\begin{align}\label{eq:geometricMean}
    \left(\frac{\langle f_{P,m_2,l}\rangle}{\langle f_{P,m_1,l}\rangle}\right)^{1/\delta\!m} = \prod_{Q\in\sf{P}^N} F_Q(\ideal{E}, \ideal{I})^{\omega(Q|\ideal{G}^{m_1}(P),\delta\!m)}
\end{align}
By the inequality of the weighted arithmetic and geometric means,
\begin{align}\label{eq:ratio}
    \left(\frac{\langle f_{P,m_2,l}\rangle}{\langle f_{P,m_1,l}\rangle}\right)^{1/\delta\!m} \leq \sum_{Q\in\sf{P}^N} w(Q|P,\delta\!m)F_Q(\ideal{E}, \ideal{I}).
\end{align}
As $\ideal{G}$ is a Clifford matrix, $\sum_{P\in\sf{P}^N} \omega(Q|P,\delta\!m) = 1$ for all Pauli matrices $Q$.
Therefore summing \cref{eq:ratio} over all input Pauli matrices $P$ gives $\hat{F} \leq F(\ideal{E},\ideal{I})$.
To prove the approximate statement, let $r_Q = 1 - F_Q(\ideal{E}, \ideal{I})$.
Expanding \cref{eq:geometricMean} to second order in the $r_Q$ gives
\begin{align*}
    \left(\frac{\langle f_{P,m_2,l}\rangle}{\langle f_{P,m_1,l}\rangle}\right)^{1/\delta\!m} = 1 - \sum_{Q\in\sf{P}^N} \omega(Q|P,\delta\!m) r_Q + \mathcal{O}(r_Q^2).
\end{align*}
The approximate claim then holds as $\mathcal{O}(r_Q^2) = \mathcal{O}([1-F(\ideal{E},\ideal{I})]^2)$ by \cref{lem:pauli_infidelities} below.
\end{proof}

\begin{lem}\label{lem:pauli_infidelities}
For any completely positive and trace-preserving map $\ideal{E}$ and any Pauli matrix $P$, 
\begin{align*}
0 \leq 1 - F_P(\ideal{E},\ideal{I}) \leq 2 - 2F(\ideal{E},\ideal{I}).
\end{align*}
\end{lem}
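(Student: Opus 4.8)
The plan is to reduce the statement to an elementary fact about Pauli channels by exploiting the Pauli twirl $\noisy{E}$ of $\ideal{E}$ already constructed in \cref{eq:pauliTwirl}. By \cref{eq:Covariant}, $\noisy{E}$ acts diagonally on the Pauli basis, $\noisy{E}(Q) = \lambda_Q Q$ with $\lambda_Q := F_Q(\ideal{E},\ideal{I})$, and a one-line computation shows these diagonal entries are unchanged by twirling: $F_Q(\noisy{E},\ideal{I}) = 2^{-N}\tr[Q\,\noisy{E}(Q)] = \lambda_Q$, and hence $F(\noisy{E},\ideal{I}) = F(\ideal{E},\ideal{I})$ as well. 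Thus it suffices to bound $1-\lambda_P$ in terms of $1 - F(\ideal{E},\ideal{I}) = 1 - 4^{-N}\sum_{Q\in\sf{P}^N}\lambda_Q$ using only properties of the Pauli channel $\noisy{E}$.

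Next I would pass to the error-probability representation $\noisy{E}(\rho) = \sum_{R\in\sf{P}^N} p_R\, R\rho R$ and record its Walsh--Hadamard duality with the eigenvalues. Since $R P R = \eta(R,P) P$ (with $\eta$ the commutation sign from the proof of \cref{thm:decay}), one gets $\lambda_P = \sum_{R\in\sf{P}^N} p_R\,\eta(R,P)$, inverted by $p_R = 4^{-N}\sum_{Q}\lambda_Q\,\eta(R,Q)$. Two elementary sums then pin down what I need: because each nonidentity Pauli commutes with exactly half of all Paulis, $\sum_{Q}\eta(R,Q) = 4^N\delta_{R,I}$, so averaging the eigenvalues gives $F(\ideal{E},\ideal{I}) = 4^{-N}\sum_Q\lambda_Q = p_I$, while trace preservation gives $\lambda_I = 1$ and $\sum_R p_R = 1$.

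With these identities both inequalities are single-line triangle-inequality estimates. For the lower bound, $\lambda_P = \sum_R p_R\,\eta(R,P) \le \sum_R p_R = 1$ since $p_R\ge 0$ and $|\eta|=1$, giving $1 - F_P(\ideal{E},\ideal{I}) \ge 0$. For the upper bound I separate the $R=I$ term, $\lambda_P = p_I + \sum_{R\neq I} p_R\,\eta(R,P)$, so that $2F(\ideal{E},\ideal{I}) - F_P(\ideal{E},\ideal{I}) = 2p_I - \lambda_P = p_I - \sum_{R\neq I} p_R\,\eta(R,P) \le \sum_R p_R = 1$, which rearranges to $1 - F_P(\ideal{E},\ideal{I}) \le 2 - 2F(\ideal{E},\ideal{I})$.

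The only place complete positivity enters, and the step I expect to be the crux, is the nonnegativity $p_R\ge 0$: it is precisely what lets me dominate the signed sums $\sum_R p_R\,\eta(R,P)$ by $\sum_R p_R = 1$. I would justify it by noting that $\noisy{E}$ is a convex combination of the completely positive maps $\ideal{P}^\dagger\ideal{E}\ideal{P}$ appearing in \cref{eq:pauliTwirl}, hence completely positive, and that a Pauli-diagonal channel is completely positive exactly when its error probabilities $p_R = 4^{-N}\sum_Q\lambda_Q\,\eta(R,Q)$ are nonnegative. Everything else is the Walsh--Hadamard bookkeeping above.
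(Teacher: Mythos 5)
Your proof is correct and follows essentially the same route as the paper's: twirl $\ideal{E}$ to a Pauli channel, use the probabilistic representation $\noisy{E}(\rho)=\sum_R p_R\,R\rho R^\dagger$ with $p_R\ge 0$ (the paper cites Holevo for this), identify $F(\ideal{E},\ideal{I})=p_I$, and obtain the upper bound from $F_P=\sum_R p_R\,\eta(R,P)\ge 2p_I-1$. The only cosmetic difference is the lower bound, which you derive directly from $\sum_R p_R\,\eta(R,P)\le\sum_R p_R=1$, whereas the paper invokes the fact that the $F_P(\noisy{E},\ideal{I})$ are eigenvalues of a channel and hence lie in the unit disc; your version is slightly more self-contained.
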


\begin{proof}
Note that \cref{eq:Covariant} holds for any completely positive and trace preserving map \ideal{E} with \noisy{E} as defined in \cref{eq:pauliTwirl}.
In particular, $F_P(\noisy{E},\ideal{I}) = F_P(\ideal{E},\ideal{I})$ for all $P\in\sf{P}^N$ and so $F(\noisy{E},\ideal{I}) = F(\ideal{E},\ideal{I})$ by \eqprocessfidelity.
As $\noisy{E}$ is covariant under Pauli channels, there exists a probability distribution $p(Q)$ over the set of Pauli matrices such that~\cite{Holevo2005}.
\begin{align}\label{eq:Holevodecomp}
\noisy{E}(A) = \sum_{Q} p(Q) QAQ^\dagger.
\end{align}
For any Kraus operator decomposition, the process fidelity can be written as~\cite{Nielsen2002}
\begin{align}\label{eq:fidIprob}
F(\noisy{E}, \ideal{I}) &= \sum_Q p(Q) |\tr Q|^2/4^N = p(I).
\end{align}
Substituting \cref{eq:Holevodecomp} into \eqpaulifidelity and using $[P,I]=0$, $p(Q)\geq 0$, and \cref{eq:fidIprob} gives
\begin{align}\label{eq:pauli_bound}
F_P(\noisy{E}, \ideal{I}) &= \sum_{Q:[Q,P]=0} 2p(Q) - 1 \notag\\
&\geq 2p(I) - 1 = 2F(\noisy{E},\ideal{I}) - 1.
\end{align}
The lower bound follows as the $F_P(\noisy{E},\ideal{I})$ are eigenvalues of \noisy{E} and hence are in the unit disc~\cite{Evans1978}.
\end{proof}

\subsection{Finite sampling effects}\label{app:sampling}

We now consider the effect of finite samples.
All the ``approximately normal'' statements in this section can be replaced by rigorous statements using the results of \cite{Harper2019}, Hoeffding's inequality~\cite{Hoeffding1963} and the union bound, at the expense of additional notation and less favorable (but pessimistic) constants.

First, note that sampling a finite number of random sequences (i.e., finite $L$) and estimating each expectation value with a finite number of measurements will produce an estimate of $\langle f_{P,m,l}\rangle$ with an error $\epsilon_{P,m}$ that is approximately normally distributed with standard deviation $\sigma_{P,m}$.
Using a series expansion of the ratio 
\begin{align*}
    \hat{F}_P := \left(\frac{\langle f_{P,m_2,l} \rangle}{\langle f_{P,m_1,l} \rangle}\right)^{1/\delta\!m},
\end{align*}
the error in each term in the sum will be approximately $(\epsilon_{P,m_2} - \epsilon_{P,m_1})/\delta\!m$ and so will be approximately normally distributed.
Moreover, if we choose $m_1$ and $m_2$ so that $\delta\!m \approx 1 - F(\ideal{E},\ideal{I})$ (where \ideal{E} is as in \cref{thm:decay}), then the error on each term in the sum will have standard deviation $\sigma_P \propto 1 - _P(\ideal{E},\ideal{I})$.
The values of $m$ in \cref{tab:parameters} satisfy this condition.

We now consider the effect of sampling a finite number $K$ of Pauli matrices $P$ with replacement under the same assumptions as in \cref{thm:decay}.
Sampling $K$ Pauli matrices $P$ uniformly at random with replacement and averaging the estimates $\hat{F}_P$ gives an estimate $\hat{F}$ whose expected variance over the Pauli matrices is
\begin{align}
    \mathbb{V}^2(\hat{F})&=
     \frac{\mathbb{V}^2(\hat{F}_P)}{K} + \sum_P\frac{ \mathbb{V}^2(\sigma_P^2)}{K}.
\end{align}
The first term satisfies
\begin{align}\label{eq:finitePauliBound}
     \mathbb{V}^2(\hat{F}_P) \leq [1-F(\ideal{E},\ideal{I})]^2
\end{align}
since for any Pauli matrix $P$,
\begin{align}
    |F(\ideal{E},\ideal{I}) - F_P(\ideal{E},\ideal{I})| 
    &\leq \max_{Q\in\sf{P}^N} |F(\ideal{E},\ideal{I}) - F_P(\ideal{E},\ideal{I})| \notag\\
    &\leq 1 - F(\ideal{E}, \ideal{I})
\end{align}
by \cref{lem:pauli_infidelities}.
Note that the variance is independent of the number of qubits.
Furthermore, if the $\delta\!m$ are chosen to be proportional to $1/(1-\hat{F})$, then the variance of $\hat{F}$ is proportional to $(1-\hat{F})^2$, so that we can efficiently estimate $1-\hat{F}$ to \textit{multiplicative} precision.

It can be seen in \figsubsampling that the standard deviation decreases with the square-root of the sampled subspaces $K$, with a least squares fit giving $\sigma=0.0135(3)/\sqrt{K}$. The observed standard deviation is larger than the lower bound given by quantum projection noise  $\sigma_\mathrm{proj}=0.00151(2)/\sqrt{K}$ but smaller than the upper bound $\sigma_\mathrm{bound}=0.0252(8)/\sqrt{K}$ on the contribution from sampling a finite number of Pauli matrices.
This suggests that the other source of statistical uncertainty, namely, a finite number of randomizations $L$ and measurements per sequence, is sufficiently small to allow us to accurately estimate the process fidelity.

\subsection{Correction operators for the MS gate}\label{sec:MS_corrections}

We performed cycle benchmarking for the identity and MS gates.
The $\rm{MS}$ gate satisfies $\rm{MS}^4 = I$, so that we can restrict $m$ to be an integral multiple of $4$.
Indeed, $\rm{MS}^2 \propto X\tn{N}$ so that we could restrict $m$ to be even numbers by keeping track of the sign (which would depend on the Pauli matrix $P$).
To compute the expectation value of $\ideal{C}(P)$, we need to know how an arbitrary Pauli operator $Q$ propagates through the $\rm{MS}$ gate.
Using $\mathrm{MS} \propto (I - iX\tn{N})/\sqrt{2}$ for even $N$ gives
\begin{align}\label{eq:MSPauliMap}
\ideal{MS}(Q)
&= \mathrm{MS}Q\mathrm{MS}^\dagger \notag\\
&= \begin{cases}
Q & \mbox{ if } QX\tn{N} = X\tn{N} Q \\
i QX\tn{N}\mathrm{MS} & \mbox{otherwise.}
\end{cases}
\end{align}

\section{Experimental methods}
\label{sec:experimental methods}
\begin{table*}[ht]
\centering
\caption{Experimental parameters for the taken CB data for different register.}
\label{tab:parameters}
\begin{tabular}{c|c|c|c|c|c}
Qubits & Subspaces $K$ & Sequence lengths $m$ & Random sequences $L$ & Total sequences & Measurement time (h) \\
\hline
2 & 15 & 4, 40 & 10 & 600 & 2.6 \\
4 & 255 & 4, 20 & 10 & 10200 & 15.7 \\
6 & 43 & 4, 8, 12 & 10 & 2580 & 3.4 \\
8 & 24 & 4, 8 & 10 & 960 & 2.0 \\
10 & 21 & 4, 8 & 10 & 840 & 1.9 \\
\end{tabular}
\end{table*}
The CB experiments are defined by a sequence of $N$-qubit Clifford gates according to the experimental protocol in \figcbcircuit. Specifically, the sequences contain a series of single qubit rotations and $N$-qubit MS gates. A rotation of qubit $j$ with angle $\theta$ is defined as $R(\theta)_j=\mathrm{exp}(i\theta p_j/2)$, where $p_j\in[X,Y,Z]$ are single-qubit Pauli operations.

After defining the sequences we compile them into the actual machine language~\cite{martinez2016compiling}. In this experiment an elementary single qubit operation consist of one addressed $z$-rotation sandwiched between two collective rotations around the $x$- or $y$-axis, e.g. $X(\pi/2)_1=X(-\pi/2)_{12}Z(\pi/2)_1X(\pi/2)_{12}$ for 2 qubits.
The collective $x$- and $y$-rotations can be seen as simple basis changes on the entire register, and thus these basis changes can be shared by the individual qubit operations. By changing the temporal order of the collective $x$-, $y$-rotations and the individual $z$-rotations, the total number of collective rotations can be minimized. 

We expect the single qubit $z$-rotations to have significantly larger infidelity compared to the collective rotations for the following reasons: First, the addressed laser beam has a smaller beam size and hence has larger intensity fluctuations. Second, we perform the $z$-rotations using the AC-Stark effect, which is quadratically more sensitive to intensity fluctuations than resonant $x$-, $y$-rotations. Therefore the number of single qubit rotations $Z(\theta)_j$ needed to perform a $N$-qubit Pauli operation is expected to be the limiting factor for local operations. In general, the average number of single qubit rotations per $N$-qubit Pauli operation scales linearly with $N$. To simplify the calibration procedure we only perform $Z(\pi/2)_j$ rotations. Thus e.g. a $Z(\pi)_j$ operation is implemented using two $Z(\pi/2)_j$ operations. In \cref{fig:z_pulses} we show the dependency of the average number of $Z(\pi/2)_j$ operations on the number of qubits. On average we implement $1.27(2)\cdot N$ addressed $\pi/2$ rotations for an $N$-qubit Pauli operation.
\begin{figure}[ht]
    \centering
    \includegraphics[width=0.5\textwidth]{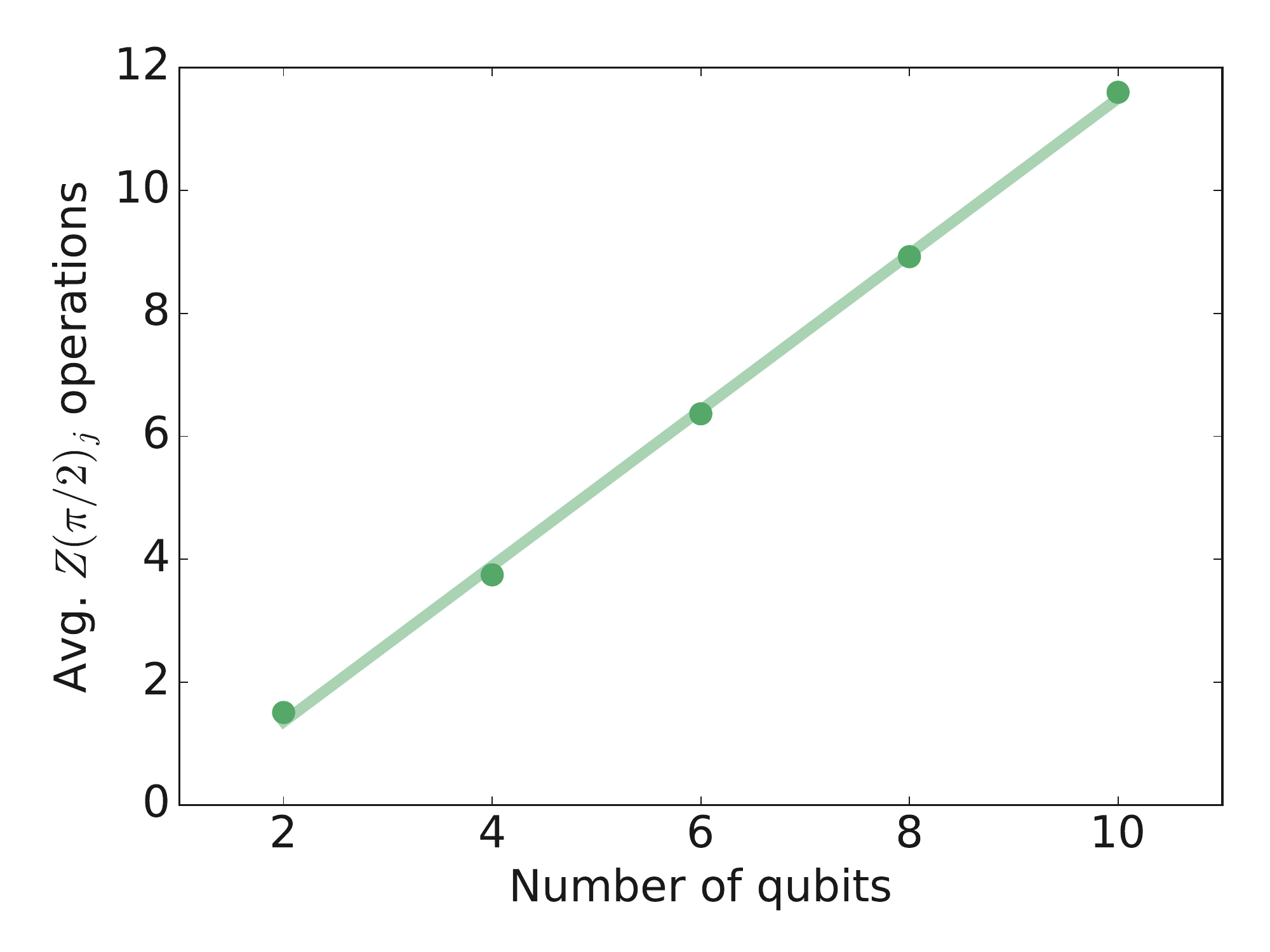}
    \caption{Average number of $Z(\pi/2)_j$ operations needed to implement a $N$-qubit Pauli gate.}
\label{fig:z_pulses}
\end{figure}

In \cref{tab:parameters} we give an overview of the experimental parameters that we used to estimate the local CB and the dressed MS fidelities.

\subsection{Testing the dependence of the estimator on the sequence length}
If the noise in the system is Markovian, we expect the estimated process fidelity to be independent of the sequence lengths $m_1$ and $m_2$ to within $\mc{O}([1-F_{\mathrm{RC}}(\noisy{G},\ideal{G})]^2)$ (see \cref{thm:precision}).
We test this by performing measurements at 3 different sequence lengths for 6 qubits, as described in \cref{tab:parameters}. 
We validate that the estimated process fidelity is independent of $m_1$ and $m_2$ by comparing the results of three different length pairs 4-8, 4-12 and 8-12. 
As can be seen in \cref{tab:model_test}, the measured fidelities agree to within half a standard deviation, which supports the validity of the assumtions for our experimental apparatus.
\newline
\begin{table}[ht]
\centering
\caption{6-qubit process fidelities estimated via CB (\%) using different pairs sequence lengths $(m_1,m_2)$. The results illustrate that the estimated process fidelity is independent of the sequence lengths used, subject to the constraint in step 2 of the protocol.}
\label{tab:model_test}
\begin{tabular}{c|c|c}
$(m_1,m_2)$ & Local gates & Dressed MS gate \\
\hline
(4,8) & $97.0(2)$ & $91.3(5)$ \\
(4,12) & $97.0(2)$ & $91.2(4)$ \\
(8,12) & $96.9(4)$ & $91.3(8)$ \\
\end{tabular}
\end{table}

\subsection{Analyzing fidelity drift}
Slow temperature fluctuations on the timescale of minutes to days cause changes in various components of our experimental apparatus. One of the major causes for a loss in fidelity over time is the alignment of the laser beams relative to the ion position. The single ion addressing laser beam is tightly focused to a spot size of $\sim 2$\,$\mu$m and the beam position changes as the temperature varies. This change in position leads to a miscalibration of the Rabi frequency as well as an increase in intensity fluctuations. We analyze the temporal dependence of the fidelity with 4-qubit CB as depicted in \cref{fig:drift}. The 255 subspaces were measured in 3 sessions, where the experimental system was recalibrated at the beginning of each session. We approximate the drift of the fidelity to be linear in first order and thus can describe the time dependent fidelity as $F(t)=F_0-\epsilon t$. We obtain an average loss of fidelity of $\epsilon_\mathrm{L}=3.3(5)\cdot 10^{-3}$\,h$^{-1}$ for local gates and $\epsilon_\mathrm{I}=5.4(8)\cdot 10^{-3}$\,h$^{-1}$ for the dressed MS gate, see \cref{tab:drift}. This measurement suggests that we can expect a maximum loss of fidelity of 1\,\% when recalibrating the apparatus every two hours.
\begin{figure*}[hbt]
\centering
\includegraphics[width=.25\textwidth]{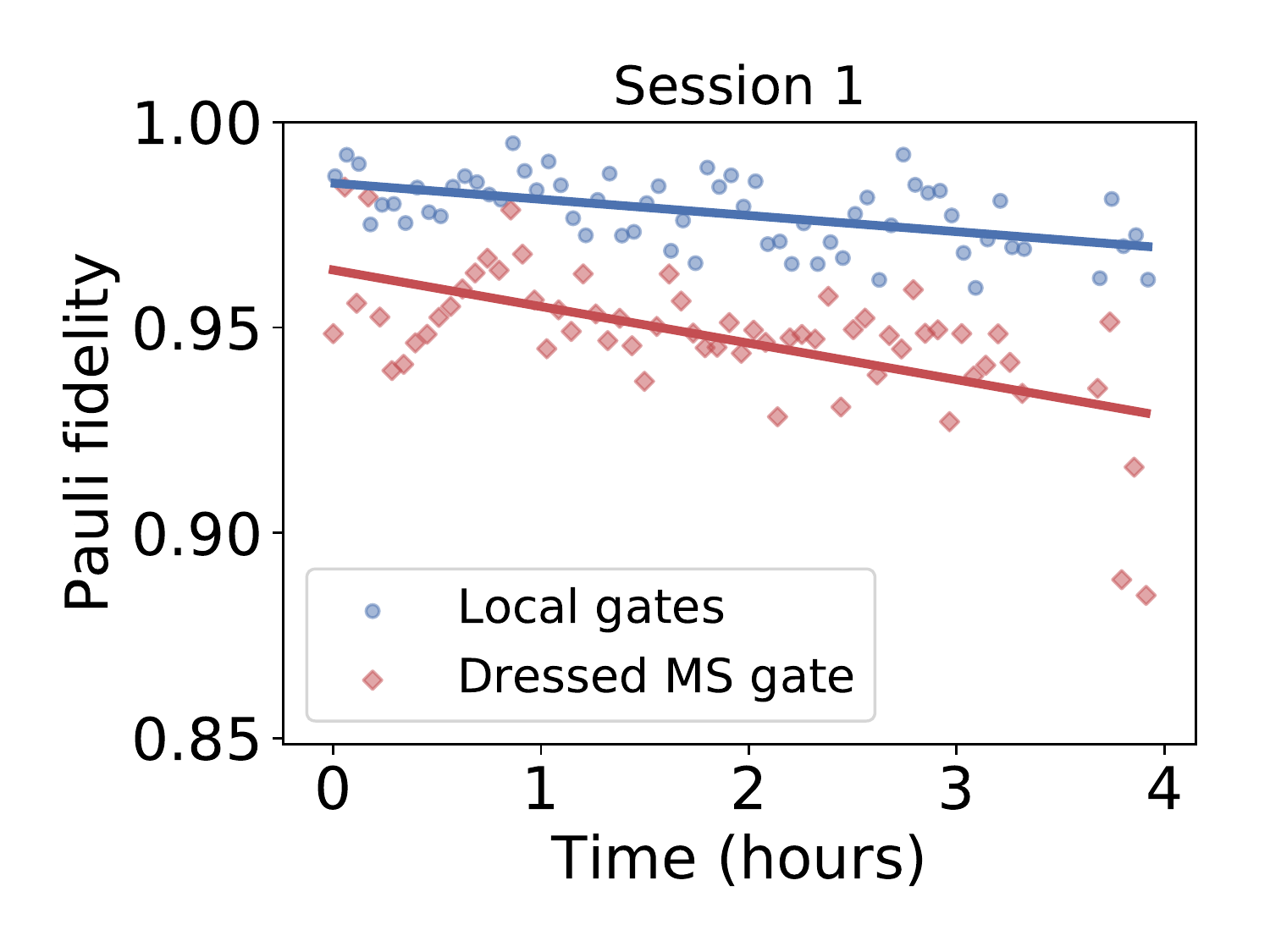}\hfill
\includegraphics[width=.5\textwidth]{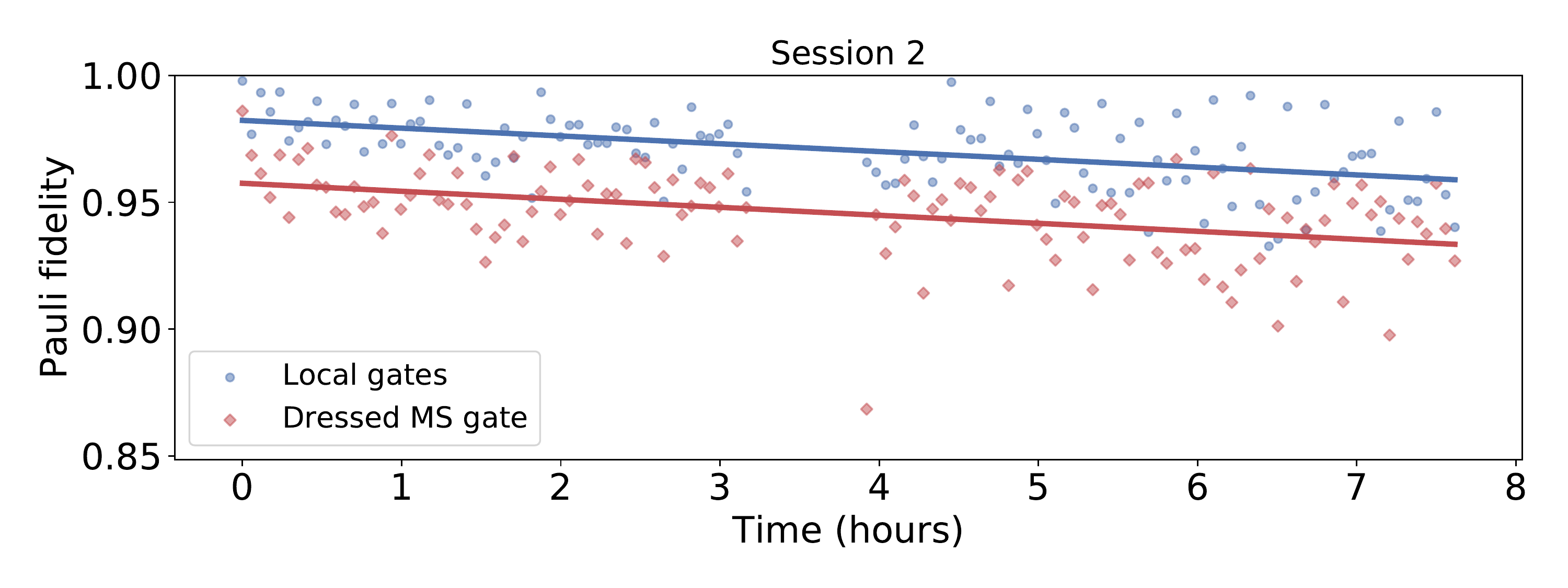}\hfill
\includegraphics[width=.25\textwidth]{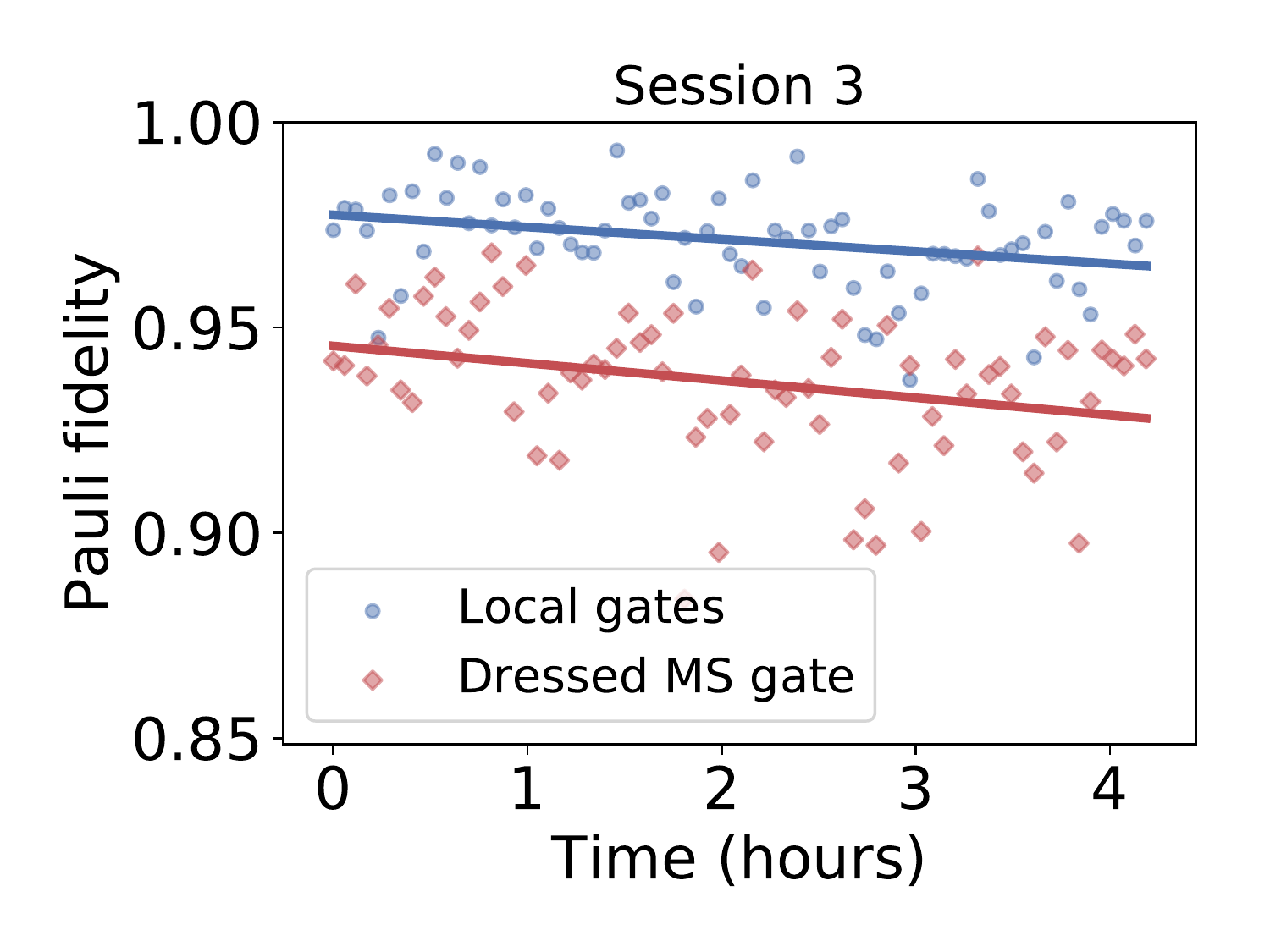}
\caption{4-qubit Pauli fidelities for local gates (blue) and the dressed MS gate (red) plotted on the time in hours. We measured all 255 subspaces in three measurement sessions, where the experiment was recalibrated at the beginning of each session.}
\label{fig:drift}
\end{figure*}
\begin{table}[ht]
\centering
\caption{4-qubit fidelity drift rates, where $\epsilon_\mathrm{L}$ and $\epsilon_\mathrm{D}$ describe the loss of fidelity per hour for local gates and the dressed MS gate. The data corresponds to the estimated linear slopes of~\cref{fig:drift}}
\label{tab:drift}
\begin{tabular}{c|c|c}
Session & $\epsilon_\mathrm{L}$ (h$^{-1}$) & $\epsilon_\mathrm{D}$ (h$^{-1}$) \\
\hline
1 & $3.9(8)\cdot 10^{-3}$ & $8.9(1.5)\cdot 10^{-3}$ \\
2 & $3.1(5)\cdot 10^{-3}$ & $3.2(6)\cdot 10^{-3}$ \\
3 & $3.0(1.1)\cdot 10^{-3}$ & $4.2(1.7)\cdot 10^{-3}$\\
\hline
Average & $3.3(5)\cdot 10^{-3}$ & $5.4(8)\cdot 10^{-3}$ \\
\end{tabular}
\end{table}

\end{document}